\newcommand{\E}{\mathbb{E}}
\newcommand{\ket}[1]{| #1 \rangle}
\newcommand{\be}{\begin{equation}}
\newcommand{\ee}{\end{equation}}
\newcommand{\bea}{\begin{eqnarray}}
\newcommand{\eea}{\end{eqnarray}}
\newcommand{\bes}{\begin{equation*}}
\newcommand{\ees}{\end{equation*}}
\newcommand{\beas}{\begin{eqnarray*}}
\newcommand{\eeas}{\end{eqnarray*}}
\newtheorem{thm}{Theorem}[section]
\newtheorem*{thm*}{Theorem}
\newtheorem{cor}[thm]{Corollary}
\newtheorem{lem}[thm]{Lemma}
\newtheorem*{lem*}{Lemma}
\newtheorem{prop}[thm]{Proposition}
\begin{document}

\date{\today}

\title{Quantum search with advice}

\author{Ashley Montanaro\footnote{montanar@cs.bris.ac.uk}\\ \\ {\small Department of Computer Science, University of Bristol,}\\{\small Woodland Road, Bristol, BS8 1UB, UK.} }

\maketitle

\begin{abstract}
We consider the problem of search of an unstructured list for a marked element $x$, when one is given advice as to where $x$ might be located, in the form of a probability distribution. The goal is to minimise the expected number of queries to the list made to find $x$, with respect to this distribution. We present a quantum algorithm which solves this problem using an optimal number of queries, up to a constant factor. For some distributions on the input, such as certain power law distributions, the algorithm can achieve exponential speed-ups over the best possible classical algorithm. We also give an efficient quantum algorithm for a variant of this task where the distribution is not known in advance, but must be queried at an additional cost. The algorithms are based on the use of Grover's quantum search algorithm and amplitude amplification as subroutines.
\end{abstract}

% ------------------------------------------------------------------------------

\section{Introduction}

Grover's algorithm for search of an unstructured list is one of the greatest successes of the nascent field of quantum computation \cite{grover97}. The algorithm operates in the {\em black box} model: given access to a function $f:\{1,\dots,n\} \rightarrow \{0,1\}$, where $f$ is promised to take the value 1 on precisely one input $x$, it finds $x$ with certainty using $O(\sqrt{n})$ queries to $f$, whereas any classical algorithm requires $\Omega(n)$ queries to perform the same task. However, it is rarely necessary to search the type of databases that we encounter in real life in a completely unstructured fashion. Instead, there is often some prior information about the location of the sought (``marked'') item $x$, which can be used to guide the search. We can formalise this intuition by considering a search problem where the searcher is given access to a probability distribution, which hints where the marked item is likely to be. 

\begin{figure}[h]
\noindent \framebox{
\begin{minipage}{14.6cm}
\noindent {\bf Problem:} {\sc Search with Advice}\\
\noindent {\bf Input:} A function $f:\{1,\dots,n\} \rightarrow \{0,1\}$ that takes the value 1 on precisely one input $x$, and an ``advice'' probability distribution $\mu = (p_y)$, $y \in \{1,\dots,n\}$, where $p_y$ is the probability that $f(y)=1$.\\
\noindent {\bf Output:} The marked element $x$.
\end{minipage}
}
\end{figure}

It is clear that knowledge of $\mu$ can enable a classical algorithm to achieve a significant reduction in the average number of queries to $f$ (with respect to $\mu$) required to find the marked element $x$. This paper is concerned with the development of {\em quantum} algorithms for the {\sc Search with Advice} problem which also use $\mu$, and which obtain significant speed-ups over any classical algorithm.
%We will see that in this setting, in contrast to the usual worst-case scenario, {\em exponential} speed-ups can be obtained.

We distinguish two models for the complexity of this problem. In the first model -- the {\em known} model -- $\mu$ is known completely beforehand, and can be used to help design an algorithm to find the marked element. The complexity of the problem is given by the minimum expected number of queries to $f$ required to find $x$, under the distribution $\mu$. In the second model -- the {\em unknown} model -- $\mu$ is not known before the algorithm starts, but the algorithm is also given access to a black box which outputs samples from $\mu$, at unit cost. In the case of quantum algorithms, the black box outputs a coherent superposition corresponding to $\mu$ (a ``quantum sample'').

In both cases, note that we are interested in the {\em average} number of queries with respect to $\mu$ required to find the marked element, rather than the worst-case number of queries. Previous work has shown that, if one considers the worst-case number of queries to the input required to compute any total function, there can only be at most a polynomial separation between quantum and classical computation \cite{beals01}. Considering the average number of queries required (over the input) allows one to sidestep these results and hope to obtain {\em exponential} speed-ups.

Indeed, previous work of Ambainis and de Wolf \cite{ambainis01} has shown that quantum algorithms {\em can} achieve exponential (or even super-exponential) reductions in average-case query complexity over classical algorithms. The model that these authors considered was that of computing a particular boolean function $f:\{0,1\}^n \rightarrow \{0,1\}$, with a particular (known) distribution on the inputs. Among other results, they exhibited a (function, distribution) pair with a super-exponential separation between quantum and classical query complexity, and even gave a function whose quantum and classical query complexity were exponentially separated under the {\em uniform} distribution.

% ------------------------------------------------------------------------------

\subsection{New results}

The main results of this paper are as follows. First, in the known model, we give a quantum algorithm for {\sc Search with Advice} which is optimal up to constant factors. Assuming without loss of generality that the probability distribution $\mu = (p_x)$ is given in non-increasing order, the algorithm uses an expected number of queries to $f$ which is of the order of
\[ \sum_{x=1}^n p_x \sqrt{x}, \]
which should be compared with the optimal classical expected number of queries,
\[ \sum_{x=1}^n p_x\,x. \]
For certain probability distributions, this represents an exponential (or even super-exponential) improvement in the expected number of queries used. The quantum algorithm is based on the use of an exact variant of Grover's algorithm \cite{grover97,hoyer00,brassard02} as a subroutine. Known lower bounds on the query complexity of quantum search are used to show that this algorithm is optimal, up to constant factors, for any probability distribution $\mu$.

In the unknown model, we give a quantum algorithm that uses a expected number of queries of the order of
\[ \left( \sum_{x,p_x > 1/n} \sqrt{p_x} \right) + \sqrt{n} \left( \sum_{x,p_x \le 1/n} p_x \right). \]
Again, this algorithm is sometimes significantly more efficient than the best possible classical algorithm. The algorithm is based on the amplitude amplification algorithms proposed by Boyer et al \cite{boyer98} and Brassard et al \cite{brassard02}; the main difference being that after performing a certain number of iterations of amplitude amplification, it reverts to exact Grover search. This can considerably improve the average query complexity.

These results in the two different models are applied to the natural class of power law distributions $p_x \propto x^{-k}$, for some constant $k > 0$. We will see that for certain values of $k$, quantum algorithms deliver very significant reductions in the average number of queries used. In particular, when $-2 < k < -3/2$, a super-exponential separation between quantum and classical computation is obtained in the known model ($O(1)$ vs.\ $\Omega(n^{k+2})$). The results for power law distributions are summarised in Figure \ref{fig:graph} (with details given in Propositions \ref{prop:powerlaw} and \ref{prop:powerlaw2} below).

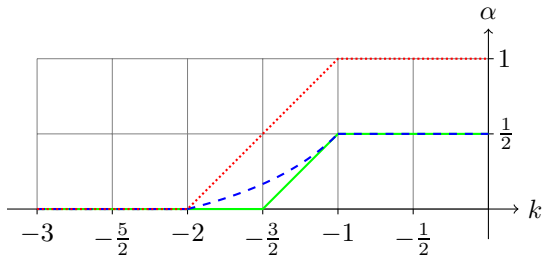
\begin{figure}
\begin{center}
\begin{tikzpicture}[domain=-3:0,xscale=2,yscale=2]
\draw[ultra thin,color=gray,step=0.5] (-3,0) grid (0,1);
\draw[->] (-3.2,0) -- (0.2,0) node[right] {$k$};
\draw[->] (0,-0.2) -- (0,1.2) node[above] {$\alpha$};
\draw (-0.5,0) -- (-0.5,-1pt) node [anchor=north] {$-\frac{1}{2}$};
\draw (-1,0) -- (-1,-1pt) node [anchor=north] {$-1$};
\draw (-1.5,0) -- (-1.5,-1pt) node [anchor=north] {$-\frac{3}{2}$};
\draw (-2,0) -- (-2,-1pt) node [anchor=north] {$-2$};
\draw (-2.5,0) -- (-2.5,-1pt) node [anchor=north] {$-\frac{5}{2}$};
\draw (-3,0) -- (-3,-1pt) node [anchor=north] {$-3$};
\draw (1pt,0.5) -- (0,0.5) node[anchor=west] {$\frac{1}{2}$};
\draw (1pt,1) -- (0,1) node[anchor=west] {$1$};

\draw[color=green,thick] (-3,0) -- (-1.5,0) -- (-1,0.5) -- (0,0.5);
\draw[color=blue,dashed,thick] (-3,0) -- (-2,0);
\draw[color=blue,dashed,thick,domain=-2:-1] plot (\x,-0.5-1/\x);
\draw[color=blue,dashed,thick] (-1,0.5) -- (0,0.5);
\draw[color=red,densely dotted,thick] (-3,0) -- (-2,0) -- (-1,1) -- (0,1);
\end{tikzpicture}
\label{fig:graph}
\caption{Query complexity of {\sc Search with Advice} in different models, for power law distributions $p_x \propto x^{-k}$. For each $k$, the query complexity of the algorithms given in this paper is $\Theta(n^{\alpha})$ for some $\alpha$ (ignoring log factors); the graph plots the exponent $\alpha$ against $k$. Dotted red line: best classical algorithm; solid green line: quantum, known probability distribution; dashed blue line: quantum, unknown probability distribution. }
\end{center}
\end{figure}

The paper is organised as follows. After defining the models and notation used, Section \ref{sec:known} contains the results on the known model, while Section \ref{sec:unknown} studies the unknown model. Various proofs are deferred to an appendix.

% ------------------------------------------------------------------------------

\subsection{Models and notation}

In this section, we set up concepts and notation that will be used throughout the paper. We assume familiarity with quantum computation \cite{nielsen00}, and in particular the concept of query complexity \cite{buhrman02}: the number of queries to the input which a classical or quantum algorithm requires to compute some function. Let $[n]$ denote the integers $\{1,\dots,n\}$, and consider an oracle function $f:[n] \rightarrow \{0,1\}$ which is promised to take the value 1 on precisely one input $x \in [n]$ (that is, $f(y) = \delta_{xy}$). We say that $x$ is the {\em marked} element. Also consider a quantum or classical algorithm $\mathcal{A}$ which, given access to $f$, attempts to output $x$. We say that $\mathcal{A}$ is a {\em valid} algorithm if, for any $f$ satisfying the above constraint, $\mathcal{A}$ outputs $x$ with certainty. Let $\mathcal{D}$ denote the set of valid deterministic classical algorithms, and let $\mathcal{Q}$ denote the set of valid quantum algorithms.

Let $\mu = (p_x)$ be a distribution on $[n]$ giving the probability for the marked element to be found at each location. We will be concerned with understanding the average number of queries to the input required to find $x$. This will only depend on $\mu$, and will hence be termed the (average-case) query complexity of $\mu$. The general model of average-case query complexity used here will be similar to that in \cite{ambainis01}, with some minor differences. We distinguish two models for the {\sc Search with Advice} problem: a {\em known} model and an {\em unknown} model. In the known model, the probability distribution $\mu$ is known beforehand, and can be used to design the algorithm. In the unknown model, $\mu$ is not known, and the algorithm must query an oracle to gain information about $\mu$.

We first define the known model. Let $\mathcal{A}$ be a valid algorithm, and let $T_{\mathcal{A}}(x)$ denote the expected number of queries to $f$ used by $\mathcal{A}$, when $x$ is the marked element. Note that, in order for this model to be interesting in the case where $\mathcal{A}$ is a quantum algorithm, intermediate measurements during the search process are allowed; otherwise, $\mathcal{A}$ would always use the same number of queries. Further, let $T_{\mathcal{A}}(\mu)$ be the expected number of queries to $f$ used by $\mathcal{A}$, where the expectation is taken over both the distribution $\mu$ and (potentially) $\mathcal{A}$'s internal randomness. That is,
\[ T_{\mathcal{A}}(\mu) = \sum_{x=1}^n p_x T_{\mathcal{A}}(x). \]
Finally, we define the main quantities of interest, the deterministic and quantum (respectively) average-case query complexities of $\mu$.
\beas
D(\mu) &=& \min_{\mathcal{A} \in \mathcal{D}} T_{\mathcal{A}}(\mu),\\
Q(\mu) &=& \min_{\mathcal{A} \in \mathcal{Q}} T_{\mathcal{A}}(\mu).
\eeas
The restriction to algorithms that succeed with certainty makes this a zero-error ({\em Las Vegas}) notion of average-case query complexity. It is common to consider an alternative {\em Monte Carlo} model of query complexity where $\mathcal{A}$ is allowed to err with some constant probability (e.g.\ 1/3). Note that this would not change the model significantly in the case of the current problem: given a (classical or quantum) Monte Carlo search algorithm that uses $t$ queries and outputs $x$ with probability $p$, one can produce an algorithm that succeeds with certainty and uses an expected number of queries of at most $(t+1)/p$ \cite[Exercise 1.3]{motwani95}.

We now turn to the unknown model. In this scenario, as well as querying $f$, we allow $\mathcal{A}$ to sample from $\mu$ using an oracle. In the case of quantum algorithms, we allow the preparation of {\em quantum} samples; that is, in the quantum case we define an oracle $O_\mu$, which performs the mapping
\[ O_\mu \ket{0} = \ket{\mu} := \sum_{x=1}^n \sqrt{p_x} \ket{x}. \]
The algorithm is also given access to the inverse operation, $O_\mu^{-1}$. We define $T^*_{\mathcal{A}}(\mu)$ as the expected total number of queries to $f$, $O_\mu$ and $O_\mu^{-1}$ used by $\mathcal{A}$ (a query to each oracle being counted as unit cost).
%and, similarly to above, define
%
%\beas
%D^*(\mu) &=& \min_{\mathcal{A} \in \mathcal{D}} T^*_{\mathcal{A}}(\mu),\\
%Q^*(\mu) &=& \min_{\mathcal{A} \in \mathcal{Q}} T^*_{\mathcal{A}}(\mu).
%\eeas
%
The oracle $O_\mu$ may appear somewhat unrealistic. However, it can be implemented if one has the ability to sum the distribution $\mu$ over arbitrary ranges \cite{grover02}. That is, given an efficient means of computing $\sum_{x=a}^b p_x$ for arbitrary $a$, $b$, one can implement $O_\mu$ efficiently.

Finally, we will make use of an exact variant of Grover's quantum search algorithm throughout this paper.
\begin{thm}[Grover \cite{grover97}, H\o yer \cite{hoyer00}, Brassard et al \cite{brassard02}]
\label{thm:grover}
Given an unstructured list of $n$ elements that contains a unique marked element, there is a quantum algorithm that finds the marked element with certainty using $\lceil\frac{\pi}{4}\sqrt{n}\rceil$ queries to the list. If the list is promised to contain either one or zero marked elements, the marked element can be found (or ``no marked element present'' returned) with certainty using one extra query.
\end{thm}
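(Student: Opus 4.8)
The plan is to recall the standard rotation-in-a-plane analysis of Grover's algorithm, and then use generalised phase rotations to remove the rounding error that normally prevents exact success. Write $|m\rangle := |x\rangle$ for the marked basis state and $|u\rangle := \frac{1}{\sqrt{n-1}}\sum_{y \ne x}|y\rangle$ for the uniform superposition over the rest, so the uniform superposition over $[n]$ is $|\psi_0\rangle = \sin\theta\,|m\rangle + \cos\theta\,|u\rangle$ with $\theta := \arcsin(1/\sqrt n)$. First I would verify that the usual Grover iterate $G = (2|\psi_0\rangle\langle\psi_0| - I)\,O_f$, with $O_f$ the phase oracle $I - 2|m\rangle\langle m|$, leaves $\mathrm{span}\{|m\rangle,|u\rangle\}$ invariant and acts there as the rotation by $2\theta$ carrying $|\psi_0\rangle$ towards $|m\rangle$; hence after $k$ iterations the amplitude on $|m\rangle$ is $\sin((2k+1)\theta)$, which equals $1$ only when $(2k+1)\theta = \pi/2$, an equation with no integer solution for most $n$.

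To make the success exact, I would replace the two phase inversions by rotations through a common tunable angle $\phi$, i.e.\ use the iterate built from $I - (1-e^{i\phi})|m\rangle\langle m|$ and $I - (1-e^{i\phi})|\psi_0\rangle\langle\psi_0|$ (the approach of H\o yer, Long, and Brassard et al.; equivalently one may run $\lfloor\,\cdot\,\rfloor$-many ordinary iterations followed by a single tuned one). This still preserves the two-dimensional subspace, and a short computation there shows that after $j$ iterations the weight on $|m\rangle$ is governed by $\phi$, $j$ and $\theta$ alone; setting $j := \lceil \tfrac{\pi}{4\theta} - \tfrac12 \rceil$ and then solving $\sin\tfrac\phi2 = \sin\!\big(\tfrac{\pi}{2(2j+1)}\big)/\sin\theta$ for $\phi$ — which is possible precisely because $j$ is chosen large enough to make the right-hand side at most $1$ — makes the amplitude on $|m\rangle$ exactly $1$. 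The number of queries is then $j$, and since $\arcsin t \ge t$ gives $\theta \ge 1/\sqrt n$, we get $j \le \lceil \tfrac{\pi}{4}\sqrt n - \tfrac12 \rceil \le \lceil \tfrac{\pi}{4}\sqrt n \rceil$, which proves the first claim.

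For the second part I would run the exact algorithm above (which is a fixed sequence of unitaries once $n$ is fixed, hence well-defined even when no element is marked and $O_f = I$), obtain a candidate $x$ by measuring, and spend one further query to compute $f(x)$: output $x$ if $f(x)=1$, and ``no marked element present'' otherwise. If exactly one element is marked, the first stage returns it with certainty and the check confirms it; if no element is marked then $f(x)=0$ for every $x$, so the check correctly reports the absence. This costs $\lceil \tfrac\pi4\sqrt n\rceil + 1$ queries in total.

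The only step needing real care is the middle one: establishing the exact relation between $\phi$, $j$ and the final amplitude inside the two-dimensional subspace (so that the displayed choice of $\phi$ genuinely yields amplitude $1$), together with the bookkeeping that the resulting $j$ never exceeds $\lceil\tfrac\pi4\sqrt n\rceil$. The plane-invariance reduction and the verification trick are routine, and the rest is elementary trigonometry.
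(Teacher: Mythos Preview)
The paper does not actually prove Theorem~\ref{thm:grover}; it states it as a known result and defers to the cited references \cite{grover97,hoyer00,brassard02}. There is therefore nothing in the paper to compare your argument against line by line.

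That said, your proposal is essentially the standard proof contained in those references. The two-dimensional rotation picture is Grover's original analysis; the replacement of the phase inversions by a common tunable phase $\phi$, with the solvability condition $\sin(\phi/2)=\sin\!\big(\tfrac{\pi}{2(2j+1)}\big)/\sin\theta$ and the choice $j=\lceil \tfrac{\pi}{4\theta}-\tfrac12\rceil$, is exactly Long's/H\o yer's exact-search construction (and is equivalent to the ``one tuned final step'' variant in Brassard et al.). Your query-count bookkeeping $j\le\lceil\tfrac{\pi}{4}\sqrt n\rceil$ via $\arcsin t\ge t$ is correct, and the verification-by-one-extra-query for the zero-or-one-marked case is the standard trick and is sound because the exact algorithm is a fixed unitary sequence independent of which (if any) element is marked. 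The only caveat you already flag yourself: the displayed formula for $\phi$ must be derived, not asserted, and one should check that with this $\phi$ the generalised iterate really rotates by $\pi/(2j+1)$ in the invariant plane so that $j$ iterations land exactly on $|m\rangle$; this is a short but genuine computation (see \cite{hoyer00} or Long, \emph{Phys.\ Rev.\ A} \textbf{64}, 022307 (2001)).
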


% ------------------------------------------------------------------------------

\section{Search with a known probability distribution}
\label{sec:known}

In this section, we will assume that $p_x$ is non-increasing with $x$ (so the most likely place for the marked element to be is at the start of the list, etc.). With this assumption, the optimal classical algorithm to find $x$ is simply to query $f(1)$ through $f(n)$ in turn, so the classical average-case query complexity can be written down as
\be
\label{eqn:deterministic}
D(\mu) = \sum_{x=1}^n p_x\,x.
\ee
Note that, classically, the algorithm obtains no benefit from the use of randomness. When $\mu$ is the uniform distribution, corresponding to having no information about the location of the marked item, Grover's algorithm (Theorem \ref{thm:grover}) achieves a quadratic reduction in average-case query complexity. However, na\"ive use of this algorithm does not give an advantage in the average-case setting in general. In the next section, we give a quantum algorithm which {\em does} significantly improve on the trivial classical algorithm above.

% ------------------------------------------------------------------------------

\subsection{Geometric search algorithm}

We now give an algorithm for the {\sc Search with Advice} problem, which will turn out to be asymptotically optimal. The quantum component of this algorithm is in fact simply Grover search (Theorem \ref{thm:grover}). Informally, the algorithm consists of splitting the input into blocks which increase in size geometrically (hence its name) and performing Grover search on each block. Interestingly, the algorithm does not need to know the precise advice probability distribution to achieve its near-optimal query complexity: it suffices to be able to sort the probabilities in non-increasing order. The algorithm is parametrised by a constant $k$, which gives the ratio of the geometric progression. We optimise $k$ below; however, changing $k$ only affects the query complexity by a constant factor.

\begin{algorithm}
\KwIn{Advice distribution $\mu = (p_x)$ in non-increasing order; function $f:[n] \rightarrow \{0,1\}$ such that $f$ takes the value 1 on precisely one input $x$; real $k>1$}
\KwOut{The marked element $x$}
$start \leftarrow 1$\;
$end \leftarrow$ 1\;
$step \leftarrow$ 0\;
\While{start $\le n$}{
  perform exact Grover search for one or zero marked elements on subset $\{start,\dots,end\}$\;
  \If{marked element found}{
    \Return{marked element}\;
  }
  $step \leftarrow step + 1$\;
  $start \leftarrow end + 1$\;
  $end \leftarrow \min(start + \lfloor k^{step} \rfloor - 1,n)$\;
}
\Return{error}\;
\caption{Geometric quantum search\vspace{\baselineskip}}
\label{alg:geometric}
\end{algorithm}

\begin{prop}
\label{prop:average}
The average number of queries used by Algorithm \ref{alg:geometric}, choosing $k=e \approx 2.718$, on an advice distribution $\mu = (p_x)$ is upper bounded by
\[ \pi e \sum_{x=1}^n p_x \sqrt{x}. \]
\end{prop}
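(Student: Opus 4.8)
The plan is to analyse Algorithm~\ref{alg:geometric} block by block. First I would fix notation: let $B_0 = \{1\}$, and for $j \ge 1$ let $B_j$ be the $j$-th block the algorithm searches, consisting of the $m_j := \lfloor e^j \rfloor$ consecutive elements following $B_{j-1}$ (fewer, if $B_j$ is the final, truncated block); write $s_j$ for the least index in $B_j$. By Theorem~\ref{thm:grover}, one round of exact Grover search for one-or-zero marked elements on $B_i$ costs $\lceil \tfrac{\pi}{4}\sqrt{m_i}\rceil + 1$ queries, and since the algorithm searches $B_0, B_1, \dots$ in turn and halts the moment it finds the marked element, if $x \in B_j$ it uses exactly $T_j := \sum_{i=0}^j \bigl(\lceil \tfrac{\pi}{4}\sqrt{m_i}\rceil + 1\bigr)$ queries. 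Hence the expected number of queries is $\sum_j \sum_{x \in B_j} p_x\, T_j$, and because every $x \in B_j$ has $x \ge s_j$, it suffices to establish the per-block estimate $T_j \le \pi e\,\sqrt{s_j}$: this gives $T_j \le \pi e \sqrt{x}$ for each $x \in B_j$, and summing over blocks yields $\sum_x p_x T(x) \le \pi e \sum_x p_x \sqrt{x}$.

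For the per-block estimate I would bound the two sides separately. Using $\lceil t \rceil + 1 \le t + 2$, the inequality $m_i \le e^i$, and the geometric sum $\sum_{i=0}^j e^{i/2} = \frac{e^{(j+1)/2}-1}{\sqrt{e}-1}$, one obtains $T_j \le 2(j+1) + \frac{\pi}{4}\cdot\frac{e^{(j+1)/2}}{\sqrt{e}-1}$. For the lower bound on $s_j$, note that $B_0,\dots,B_{j-1}$ are full, so $s_j = 2 + \sum_{i=1}^{j-1}\lfloor e^i\rfloor$ for $j \ge 1$; checking $j = 0,1$ directly and using $\lfloor e^{j-1}\rfloor > e^{j-1}-1$ for $j \ge 2$ gives $s_j \ge e^{j-1}$, hence $\sqrt{s_j} \ge e^{(j-1)/2}$. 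So it is enough to show $T_j \le \pi e\, e^{(j-1)/2}$, and dividing through by $e^{(j-1)/2}$ reduces this to $\frac{2(j+1)}{e^{(j-1)/2}} + \frac{\pi e}{4(\sqrt{e}-1)} \le \pi e$. The second term is a constant, roughly $3.29$; the first equals $2(j+1)e^{-(j-1)/2}$, and the real function $(j+1)e^{-(j-1)/2}$ is maximised at $j=1$ with value $2$, so the first term is at most $4$ and the left-hand side is at most about $7.3 < \pi e \approx 8.54$.

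This is really just constant bookkeeping, so I do not anticipate a genuine obstacle; the only points needing care are (i) phrasing both estimates one-sidedly ($m_i \le e^i$, $\lceil t\rceil+1 \le t+2$), so that the ceilings, the additive $+1$'s, and the truncation of the final block can only help, and (ii) verifying the lower bound $s_j \ge e^{j-1}$ in the first couple of cases, where the asymptotic estimate is loose. It is perhaps worth remarking that the argument never uses that $\mu$ is non-increasing, only the fixed indexing of the list, consistent with the paper's observation that the algorithm needs only to be able to sort the probabilities, not to know them.
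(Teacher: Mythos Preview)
Your proposal is correct and follows essentially the same approach as the paper. Both arguments bound the cumulative cost by $2(j+1)+\tfrac{\pi}{4}\sum_{i\le j} e^{i/2}$ and then relate the block index to $x$; the only cosmetic differences are that the paper phrases the link as $m\le\log_k x+1$ (you phrase it as $s_j\ge e^{j-1}$, which is equivalent) and estimates the geometric sum via an integral rather than the closed form, before finishing with the same kind of numerical inequality ($\ln x+2\le\tfrac{\pi e}{4}\sqrt{x}$ in the paper, your $4+3.29<\pi e$).
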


\begin{proof}
In the $m$'th iteration of the loop, the (at most) $\lfloor k^m \rfloor$ elements contained in the range
\be
\label{eqn:range}
R_m = \{1 + \sum_{i=0}^{m-1} \lfloor k^i \rfloor, \dots , \min(\lfloor k^m \rfloor - \sum_{i=0}^{m-1} \lfloor k^i \rfloor,n)\}
\ee
will be searched. By Theorem \ref{thm:grover}, the Grover search step in this iteration uses $\left\lceil\frac{\pi}{4}\sqrt{\lfloor k^m \rfloor}\right\rceil + 1$ queries. Then, for any marked element $x \in R_m$, a total of at most
\[ \sum_{s=0}^m \left( \left\lceil\frac{\pi}{4}\sqrt{\lfloor k^s \rfloor}\right\rceil + 1 \right) \le 2(m+1) + \frac{\pi}{4} \sum_{s=0}^m k^{s/2} \]
queries will be used by Algorithm \ref{alg:geometric} to find $x$. It is clear from (\ref{eqn:range}) that, for any $x \in R_m$, $m \le \log_k x + 1$. The average-case query complexity is therefore upper bounded by
\[ \sum_{x=1}^n p_x \left( 2\log_k x + 4 + \frac{\pi}{4} \sum_{s=0}^{\lfloor \log_k x + 1 \rfloor} k^{s/2} \right), \]
and, estimating the inner sum by an integral, we obtain an upper bound of
\[ \sum_{x=1}^n p_x \left( 2 \log_k x + 4 + \frac{\pi}{4} \int_0^{\log_k x + 2} k^{s/2}\,ds \right) = 2 \sum_{x=1}^n p_x \left( \log_k x + 2 + \frac{\pi k}{4 \ln k} \sqrt{x} \right). \]
Picking $k=e$, and noting that $\ln x + 2 \le \frac{\pi e}{4} \sqrt{x}$ for all $x > 0$, completes the proof.
\end{proof}

% ------------------------------------------------------------------------------

\subsection{Optimality of the geometric search algorithm}

We now show that Algorithm \ref{alg:geometric} is in fact optimal, up to a constant factor. This result will rely on the following known exact bound on the query complexity of quantum search.

\begin{thm}[Grover \cite{grover97}, Zalka \cite{zalka99}]
\label{thm:lower}
Let $f:[n] \rightarrow \{0,1\}$ be a function that takes the value 1 on precisely one input $x$, and let $\mathcal{A}$ be a quantum search algorithm that uses $T$ queries to $f$ and outputs $x$ with probability at least $p$, for all $x$. Then
\[ T \ge \left\lceil \frac{\arcsin \sqrt{p}}{2 \arcsin(1/\sqrt{n})} - \frac{1}{2} \right\rceil, \]
and this number of queries is achieved by Grover's algorithm.
\end{thm}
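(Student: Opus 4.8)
This is Grover's original search lower bound, sharpened to the exact constant by Zalka; the plan is to reprove it by the hybrid method. Fix $T$. By deferring intermediate measurements we may assume $\mathcal{A}$ runs $U_T O_x U_{T-1} O_x \cdots U_1 O_x U_0$ on some fixed start state, where $O_x = I - 2\ket{x}\bra{x}$ is the phase oracle on an $n$-dimensional index register (obtained from the bit oracle by phase kickback) and $U_0,\dots,U_T$ are input-independent unitaries on the index register together with a workspace, followed by a POVM $\{M_y\}_{y\in[n]}$ reporting the marked element. Write $\ket{\psi_x^t}$ for the state after $t$ oracle calls when $x$ is marked.

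\emph{Step 1 (symmetrise).} Given $\mathcal{A}$, consider the algorithm that first draws $\pi\in S_n$ uniformly, conjugates every oracle call by the permutation operator $P_\pi$ on the index register, runs $\mathcal{A}$, and relabels the output by $\pi$. This uses the same $T$ queries and, when $x$ is marked, outputs $x$ with probability $\frac1n\sum_y p_y \ge p$ \emph{independently of $x$}. So it suffices to lower bound $T$ for an algorithm with a single common success probability $\bar p \ge p$; and, as is standard, an averaging/convexity argument over the symmetry group lets us further assume the $U_t$ and the start state are $S_n$-invariant on the index register, so that $\ket{\psi^t_{\pi(x)}} = P_\pi\ket{\psi^t_x}$.

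\emph{Step 2 (angle per query).} Let $\ket{\phi^t}$ be the state after $t$ steps of the \emph{empty}-oracle run (each $O_x$ replaced by $I$), and decompose $\ket{\psi^t_x} = \ket{x}\otimes v^t_x + \ket{r^t_x}$ with $\ket{r^t_x}$ supported on index values $\neq x$. A query touches only the marked component, $O_x\ket{\psi^t_x} = \ket{\psi^t_x} - 2\,\ket{x}\otimes v^t_x$, while $S_n$-invariance pins down the empty run, $\|\ip{x}{\phi^t}\|^2 = 1/n$ for every $x,t$. Feeding $\ket{\psi^{t+1}_x} - \ket{\phi^{t+1}} = U_{t+1}\big(O_x(\ket{\psi^t_x} - \ket{\phi^t}) - 2\,\ket{x}\otimes\ip{x}{\phi^t}\big)$ through the triangle inequality gives only the crude $\|\ket{\psi^T_x} - \ket{\phi^T}\| \le 2T/\sqrt n$, hence $T = \Omega(\sqrt n)$ with a loose constant. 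To recover the exact constant, instead use the $S_n$-symmetry to collapse the dynamics to an essentially two-dimensional ``rotor'': track the unit vector built from the marked-index direction $\ket{x}$ and the symmetrised unmarked direction $\frac{1}{\sqrt{n-1}}\sum_{y\neq x}\ket{y}$ (each tensored with workspace), and show that one query followed by a unitary rotates it by an angle at most $2\theta$, where $\theta := \arcsin(1/\sqrt n)$ is the angle between the start state and the marked subspace. Iterating from a start angle $\le \theta$, the weight of $\ket{\psi^T_x}$ on the marked subspace, and hence the success probability $\bar p$, obeys $\sqrt{\bar p} \le \sin\big((2T+1)\theta\big)$.

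\emph{Step 3 (invert and match).} Assuming $(2T+1)\theta \le \pi/2$ (otherwise $T$ already satisfies the claimed inequality, since $\arcsin\sqrt p \le \pi/2$), monotonicity of $\arcsin$ turns $\sqrt p \le \sqrt{\bar p} \le \sin((2T+1)\theta)$ into $(2T+1)\arcsin(1/\sqrt n) \ge \arcsin\sqrt p$, i.e.\ $T \ge \frac{\arcsin\sqrt p}{2\arcsin(1/\sqrt n)} - \frac12$, and the ceiling may be inserted because $T\in\N$. For the matching upper bound, run Grover's iteration from the uniform superposition: the amplitude on the marked element after $t$ iterations is exactly $\sin((2t+1)\theta)$, so taking $T$ equal to the stated ceiling attains success probability at least $p$. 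The main obstacle is Step 2: upgrading the easy triangle-inequality hybrid bound (which loses a factor of $\pi/2$) to the tight form $\sin((2T+1)\theta)$, which requires genuinely exploiting the permutation symmetry to reduce the evolution to a planar rotation and to bound the rotation angle per query, rather than merely estimating norms of difference vectors.
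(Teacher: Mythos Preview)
The paper does not give its own proof of this theorem: it is stated as a known result and attributed to Grover and Zalka, then used as a black box in Proposition~\ref{prop:lasvegaslower}. So there is no paper-proof to compare against; I can only comment on your outline as a proof of the cited result.

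Your plan names the right landmarks---hybrid argument for the crude $\Omega(\sqrt n)$, then a sharpened ``rotation-per-query'' analysis for the exact constant, with the matching upper bound supplied by Grover's iteration---and you correctly flag Step~2 as the crux. Two points deserve attention. First, the symmetrisation in Step~1 is stated too strongly: randomising over $\pi\in S_n$ gives a \emph{mixture} of algorithms with the averaged success probability, but one cannot then conclude that each $U_t$ is $S_n$-invariant, since averaging unitaries does not yield a unitary. Zalka's original argument avoids this by summing the deviations $\sum_x \|\ket{\psi_x^t}-\ket{\phi^t}\|^2$ over all possible marked elements and bounding the sum; the inequality is saturated by symmetric algorithms, but symmetry is not assumed. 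If you want to keep your symmetrised viewpoint, you must purify the random $\pi$ into an auxiliary register and argue about the joint symmetry, which is more delicate than what you wrote. Second, Step~2 as written is an assertion, not an argument: the statement that one oracle call advances the two-dimensional rotor by at most $2\theta$ is precisely the content of Zalka's tightening and requires a genuine variational/angle inequality (concavity of $\arcsin$ combined with the constraint $\sum_x |\ip{x}{\phi^t}|^2=1$), not merely invoking the symmetry. As it stands, your proposal is an accurate roadmap that identifies the obstacle but does not surmount it.
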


As stated, this bound involves worst-case query complexity (that is, the largest possible number of queries used by $\mathcal{A}$, on the worst possible input). In our setting, we will need to lower bound the {\em expected} number of queries used by $\mathcal{A}$ on the worst possible input. This can be done with the following proposition.

\begin{prop}
\label{prop:lasvegaslower}
Let $\mathcal{A}$ be a valid quantum search algorithm such that $T_{\mathcal{A}}(x) \le T$ for all $x$, for some $T_0$. Then
\[ T \ge \frac{0.206}{\arcsin{1/\sqrt{n}}} - 0.316 \ge 0.206\sqrt{n} - 1.\]
\end{prop}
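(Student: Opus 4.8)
The plan is to reduce the statement about the expected number of queries to a statement about the worst-case number of queries, so that Theorem~\ref{thm:lower} can be applied. The key observation is that if $\mathcal{A}$ satisfies $T_{\mathcal{A}}(x) \le T$ for all $x$, then by Markov's inequality, on any fixed input $\mathcal{A}$ uses at most $cT$ queries with probability at least $1 - 1/c$, for any constant $c > 1$. Truncating $\mathcal{A}$ after $cT$ queries (and outputting an arbitrary answer if it has not yet halted) therefore yields a quantum search algorithm that uses at most $cT$ queries in the worst case and outputs the marked element with probability at least $1 - 1/c$, for every $x$.

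Next I would feed this truncated algorithm into Theorem~\ref{thm:lower} with success probability $p = 1 - 1/c$. That gives
\[
cT \ge \left\lceil \frac{\arcsin\sqrt{1 - 1/c}}{2\arcsin(1/\sqrt{n})} - \frac12 \right\rceil \ge \frac{\arcsin\sqrt{1-1/c}}{2\arcsin(1/\sqrt{n})} - \frac12,
\]
so that
\[
T \ge \frac{\arcsin\sqrt{1-1/c}}{2c\,\arcsin(1/\sqrt{n})} - \frac{1}{2c}.
\]
It remains to choose $c$ to make the leading constant $\frac{\arcsin\sqrt{1-1/c}}{2c}$ as large as possible. Optimising over $c$ numerically gives a value of roughly $0.206$ for the best constant (with the optimal $c$ somewhere around $c \approx 2$), and then $\frac{1}{2c} \le 0.316$ for that choice, which yields exactly the stated bound
\[
T \ge \frac{0.206}{\arcsin(1/\sqrt{n})} - 0.316.
\]
Finally, the second inequality $\frac{0.206}{\arcsin(1/\sqrt n)} - 0.316 \ge 0.206\sqrt n - 1$ follows from the elementary fact that $\arcsin(1/\sqrt n) \le \frac{1}{\sqrt n}\cdot\frac{\pi}{2}$... more carefully, one uses $\arcsin\theta \le \theta + O(\theta^3)$ together with a crude bound to absorb the correction into the additive constant, so I would just verify $\arcsin(1/\sqrt n) \le \tfrac{1}{\sqrt n}(1 + \tfrac{1}{n})$ or similar and check the constants work out.

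The main obstacle is purely the bookkeeping of constants: one has to pin down the optimal $c$ in the Markov truncation, evaluate $\arcsin\sqrt{1-1/c}$ there, and confirm that the resulting numbers are at least $0.206$ and at most $0.316$ respectively — and then separately check the conversion from the $\arcsin(1/\sqrt n)$ form to the clean $\sqrt n$ form, again only up to adjusting the additive constant. There is no conceptual difficulty beyond noticing that Markov's inequality is what bridges the "bounded expectation" hypothesis and the "bounded worst case" hypothesis required by Theorem~\ref{thm:lower}. (I note in passing that the hypothesis as stated mentions both "$\le T$" and "for some $T_0$", which looks like a typo for "$T_{\mathcal{A}}(x) \le T_0$ for all $x$"; the proof is unaffected by which symbol is used.)
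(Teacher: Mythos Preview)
Your proposal is correct and is essentially identical to the paper's proof: both apply Markov's inequality to convert the expected-query bound into a worst-case bound with a controlled failure probability, then invoke Theorem~\ref{thm:lower} and optimise the free parameter numerically. The only cosmetic difference is the parametrisation --- the paper optimises directly over the success probability $p$ (finding $p \approx 0.369$, which corresponds to $c = 1/(1-p) \approx 1.585$ rather than your rough guess of $c \approx 2$) --- but the resulting constants $0.206$ and $0.316$ are the same.
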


\begin{proof}
Let $t_{\mathcal{A}}(x)$ be the random variable giving the number of queries used by $\mathcal{A}$ on input $x$. Thus $T_{\mathcal{A}}(x) = \E\,t_{\mathcal{A}}(x)$, where the expectation is taken over $\mathcal{A}$'s internal randomness. By Markov's inequality, for all $x$ and all $0 < p < 1$,
\[ \Pr[t_{\mathcal{A}}(x) \ge T_{\mathcal{A}}(x)/(1-p)] \le (1-p). \]
Thus a quantum search algorithm $\mathcal{A}$ that uses an expected number of at most $T$ queries on all $x$ gives a bounded-error quantum search algorithm that uses at most $T/(1-p)$ queries on all $x$ and succeeds with probability at least $p$: just run $\mathcal{A}$ until it has used $T/(1-p)$ queries, and if it has not output $x$, output a random integer between 1 and $n$. By Markov's inequality, this will succeed with probability at least $p$.

So, by Theorem \ref{thm:lower}, we have that for any $0 < p < 1$
\[ T \ge (1-p)\left(\frac{\arcsin \sqrt{p}}{2 \arcsin(1/\sqrt{n})} - \frac{1}{2}\right). \]
Performing numerical maximisation of the right-hand side over $p$, one finds that for large $n$ the maximum is achieved at $p \approx 0.369$, which proves the proposition.
\end{proof}

Note that it is known that one can indeed achieve an expected query complexity that is somewhat less than the usual worst-case query complexity guaranteed by Grover's algorithm \cite{boyer98,zalka99}. By stopping and restarting Grover search, it is possible to find the marked element $x$ using approximately $0.690\sqrt{n}$ expected queries on all $x$, whereas straightforward use of Grover's algorithm guarantees approximately $0.785\sqrt{n}$ queries.

We are now ready to prove that Algorithm \ref{alg:geometric} is asymptotically optimal.

\begin{prop}
Let $\mu = (p_x)$, $x \in [n]$ be an arbitrary probability distribution. Then
\[ Q(\mu) \ge 0.206 \sum_{x=1}^n p_x \sqrt{x} - 1. \]
\end{prop}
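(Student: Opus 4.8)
The plan is to reduce to Proposition~\ref{prop:lasvegaslower} by restricting the optimal algorithm to sub-lists, and then to glue the resulting bounds together with a greedy ``peeling'' argument and the rearrangement inequality. Fix any valid quantum algorithm $\mathcal{A}$; since $Q(\mu)$ is the minimum of $T_{\mathcal{A}}(\mu)=\sum_x p_x T_{\mathcal{A}}(x)$ over all such $\mathcal{A}$, it suffices to prove $T_{\mathcal{A}}(\mu)\ge 0.206\sum_{x=1}^n p_x\sqrt{x}-1$.

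First I would establish a sublist lemma: for every $S\subseteq[n]$ there is some $x\in S$ with $T_{\mathcal{A}}(x)\ge 0.206\sqrt{|S|}-1$. Given an instance $g$ of search on the $|S|$-element list $S$, pad it to an instance $f$ on $[n]$ by setting $f(y)=0$ for $y\notin S$, and run $\mathcal{A}$ on $f$. Each query of $\mathcal{A}$ can be answered using one query to $g$ (queries supported outside $S$ cost nothing), and since $\mathcal{A}$ is valid it returns the marked element of $g$ with certainty. So this simulation is a valid quantum search algorithm on a list of size $|S|$ whose expected number of queries on element $x\in S$ is at most $T_{\mathcal{A}}(x)$; applying Proposition~\ref{prop:lasvegaslower} with list size $|S|$ gives the lemma.

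Next I would iterate the lemma to extract a permutation. Starting from $S_n=[n]$, repeatedly pick $\pi(m)\in S_m$ with $T_{\mathcal{A}}(\pi(m))\ge 0.206\sqrt{m}-1$ and set $S_{m-1}=S_m\setminus\{\pi(m)\}$, for $m=n,n-1,\dots,1$; inductively $|S_m|=m$, so this is legitimate, and $\pi$ is a bijection of $[n]$. Then
\[ T_{\mathcal{A}}(\mu)=\sum_{m=1}^n p_{\pi(m)}\,T_{\mathcal{A}}(\pi(m)) \ge 0.206\sum_{m=1}^n p_{\pi(m)}\sqrt{m}\;-\;\sum_{m=1}^n p_{\pi(m)} = 0.206\sum_{m=1}^n p_{\pi(m)}\sqrt{m}-1. \]
Since $\mu$ is listed in non-increasing order and $\sqrt{\cdot}$ is increasing, the rearrangement inequality gives $\sum_m p_{\pi(m)}\sqrt{m}\ge\sum_m p_m\sqrt{m}$, completing the bound. (For a $\mu$ not presented in sorted order, the same computation yields the bound with each $p_x$ replaced by the corresponding entry of the non-increasing rearrangement of $\mu$.)

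The one genuinely delicate point is the sublist lemma: one must check that the zero-padding reduction really produces a \emph{valid}, zero-error quantum search algorithm on the smaller list and that its expected query count is controlled by $T_{\mathcal{A}}$ — in particular that the intermediate measurements, which are precisely what make the Las Vegas query count finite and meaningful, carry over intact to the reduced algorithm. Everything after that is bookkeeping together with the rearrangement inequality.
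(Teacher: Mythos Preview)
Your proposal is correct and follows essentially the same route as the paper: both arguments restrict $\mathcal{A}$ to sub-lists of decreasing size, invoke Proposition~\ref{prop:lasvegaslower} to extract one element per size with a $0.206\sqrt{m}-1$ lower bound, and then finish with the rearrangement inequality against the non-increasing $(p_x)$. Your exposition is in fact more careful than the paper's, which compresses the sublist reduction (your ``zero-padding'' lemma) into a single parenthetical remark.
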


\begin{proof}
Let $\mathcal{A}$ be a valid quantum search algorithm and assume that $\mu$ is non-increasing. We aim to lower bound $T_{\mathcal{A}}(\mu) = \sum_{x=1}^n p_x T_{\mathcal{A}}(x)$. By Proposition \ref{prop:lasvegaslower}, there must exist a $y$ such that $T_{\mathcal{A}}(y) \ge 0.206\sqrt{n} - 1$. Similarly, there must exist $y' \neq y$ such that $T_{\mathcal{A}}(y') \ge 0.206\sqrt{n-1} - 1$ (or $\mathcal{A}$ would be able to find a marked element in the set of all elements not equal to $y$, using a number of queries that violates Proposition \ref{prop:lasvegaslower}). Iterating this argument, we see that for each $k$ such that $1 \le k \le n$ there exists an $x$ such that $T_{\mathcal{A}}(x) \ge 0.206\sqrt{k} - 1$. By a rearrangement inequality, this implies that
\[ T_{\mathcal{A}}(\mu) \ge \sum_{x=1}^n p_x (0.206\sqrt{x} - 1) \]
and proves the proposition.
\end{proof}

% ------------------------------------------------------------------------------

\subsection{Power law distributions}

We now apply Algorithm \ref{alg:geometric} to a natural class of probability distributions: power law distributions. We will see that significant speed-ups can be obtained over any possible classical algorithm.

\newcounter{powerlaw}\setcounter{powerlaw}{\value{thm}}
\newcounter{powerlawsec}\setcounter{powerlawsec}{\value{section}}
\begin{prop}
\label{prop:powerlaw}
Let $\mu = (p_x)$, $x \in [n]$ be a probability distribution where $p_x \propto x^k$ for some constant $k<0$. Then
\begin{equation*}
\begin{array}{cc}
D(\mu) = \left\{ 
\begin{array}{ll}
 \Theta(n) &\,\,[-1 < k < 0] \\
 \Theta(n/\log n) &\,\,[k=-1] \\
 \Theta(n^{k+2}) &\,\,[-2 < k < -1] \\
 \Theta(\log n) &\,\,[k=-2] \\
 \Theta(1) &\,\,[k<-2]
\end{array}
\right.
&, \mbox{ and }
Q(\mu) = \left\{ 
\begin{array}{ll}
  \Theta(\sqrt{n}) &\,\,[-1 < k < 0] \\
  \Theta(\sqrt{n}/\log n) &\,\,[k=-1] \\
  \Theta(n^{k+3/2}) &\,\,[-3/2 < k < -1] \\
  \Theta(\log n) &\,\,[k=-3/2] \\
  \Theta(1) &\,\,[k<-3/2]
  \end{array}
\right.
\end{array}
\end{equation*}
\end{prop}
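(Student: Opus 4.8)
The plan is to reduce everything to estimating sums of the shape $\sum_{x=1}^n x^a$ for a handful of real exponents $a$. Write $p_x = x^k/Z_n$ with normalising constant $Z_n = \sum_{x=1}^n x^k$. By (\ref{eqn:deterministic}) we have $D(\mu) = \sum_x p_x\,x = \big(\sum_{x=1}^n x^{k+1}\big)/Z_n$, and combining the upper bound of Proposition \ref{prop:average} with the matching lower bound proved above in Section \ref{sec:known} (namely $Q(\mu) \ge 0.206\sum_x p_x\sqrt{x} - 1$) gives $Q(\mu) = \Theta\big(\sum_x p_x\sqrt{x}\big) = \Theta\big(\big(\sum_{x=1}^n x^{k+1/2}\big)/Z_n\big)$ whenever this quantity is unbounded; the bounded regime I treat separately at the end.

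The single ingredient I would record first is the elementary asymptotic, obtained by sandwiching the sum between integrals $\int t^a\,dt$ (using that $t\mapsto t^a$ is monotone, decreasing for $a<0$ and increasing for $a\ge 0$):
\[
\sum_{x=1}^n x^a = \begin{cases} \Theta(n^{a+1}) & a > -1,\\[2pt] \Theta(\log n) & a = -1,\\[2pt] \Theta(1) & a < -1.\end{cases}
\]
Applying this with $a=k$, $a=k+1$ and $a=k+1/2$ yields the asymptotics of $Z_n$, of $\sum_x x^{k+1}$ and of $\sum_x x^{k+1/2}$ as piecewise functions of $k$, with the relevant breakpoints at $k=-1$ (for $Z_n$), $k=-2$ (for the $D$-numerator) and $k=-3/2$ (for the $Q$-numerator).

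The stated formulas then follow by taking the two ratios, and I would lay this out as an explicit case split so that each of the ten entries is verified. For example, when $-1<k<0$: $Z_n=\Theta(n^{k+1})$, $\sum_x x^{k+1}=\Theta(n^{k+2})$, $\sum_x x^{k+1/2}=\Theta(n^{k+3/2})$, hence $D(\mu)=\Theta(n)$ and $Q(\mu)=\Theta(\sqrt n)$; at $k=-1$ the denominator drops to $\Theta(\log n)$ while the numerators stay $\Theta(n)$ and $\Theta(\sqrt n)$, producing the logarithmic corrections; for $k<-1$ one has $Z_n=\Theta(1)$ and simply reads off the numerator, whose behaviour changes at $k=-2$ for $D$ and at $k=-3/2$ for $Q$.

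The one place that needs a separate argument is the regime where the answer is $\Theta(1)$ (i.e.\ $k<-2$ for $D$ and $k<-3/2$ for $Q$): there the corresponding sums are $O(1)$, so the upper bounds are immediate, but the additive $-1$ in the lower bounds renders them vacuous. Here I would instead use the trivial facts that $D(\mu)=\sum_x p_x\,x \ge \sum_x p_x = 1$, and that for $n\ge 2$ any valid quantum algorithm must make at least one query on every computational branch (it cannot be certain of $x$ before querying $f$), so $Q(\mu)\ge 1=\Omega(1)$. I expect this bounded-regime bookkeeping, rather than any of the integral estimates, to be the only delicate point in the argument.
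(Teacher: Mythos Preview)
Your proposal is correct and follows essentially the same route as the paper: write $p_x=\alpha x^k$, estimate $\sum_x x^a$ by integrals for the three relevant exponents $a\in\{k,\,k+1,\,k+1/2\}$, and read off the ratios case by case, invoking (\ref{eqn:deterministic}) for $D(\mu)$ and Proposition~\ref{prop:average} together with the lower bound $Q(\mu)\ge 0.206\sum_x p_x\sqrt{x}-1$ for $Q(\mu)$. Your explicit treatment of the $\Theta(1)$ regime via the trivial bounds $D(\mu)\ge 1$ and $Q(\mu)\ge 1$ is in fact slightly more careful than the paper, which does not spell this out.
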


\begin{proof}
Deferred to Appendix.
\begin{comment}
From the statement of the proposition, $p_x = \alpha x^k$ for some $\alpha$. We first estimate the normalising constant $\alpha$. The constraint that $\sum_{x=1}^n p_x = 1$ implies that $1/\alpha = \sum_{x=1}^n x^k$. Estimating this sum by an integral, we have that 
%
\[ \int_1^n x^k\,dx \le 1/\alpha \le 1 + \int_1^n x^k\,dx, \]
%
implying that, for $k \neq -1$,
%
\[ \frac{n^{k+1}-1}{k+1} \le 1/\alpha \le \frac{n^{k+1}-1}{k+1} + 1. \]
%
By (\ref{eqn:deterministic}), if it also holds that $k \neq -2$,
%
\[ D(\mu) = \alpha \sum_{x=1}^n x^{k+1} \ge \alpha \int_1^n x^{k+1}\,dx \ge \frac{(n^{k+2}-1)(k+1)}{(n^{k+1}+k)(k+2)}. \]
%
The upper bound on $D(\mu)$ is very similar. It is easy to see that this proves the deterministic half of the proposition, except for the cases $k=-1$ and $k=-2$, which can be verified directly. In the quantum case, by Proposition \ref{prop:average}, for $k \neq -1,-3/2$,
%
\[ Q(\mu) \le \alpha \pi e \sum_{x=1}^n x^{k+1/2} \le \alpha \pi e \left(1+ \int_{1}^{n+1} x^{k+1/2}\,dx\right) = \pi e \left(\frac{((n+1)^{k+3/2}+k+1/2)(k+1)}{(n^{k+1}-1)(k+3/2)} \right). \]
%
Again, the lower bound is similar and the special cases $k = -1$, $k=-3/2$ can be verified directly.
\end{comment}
\end{proof}

\begin{cor}
There exists a probability distribution $\mu$ such that $D(\mu) = \Omega(n^{1/2-\epsilon})$ for arbitrary $\epsilon > 0$, but $Q(\mu) = O(1)$.
\end{cor}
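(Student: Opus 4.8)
The plan is to obtain the corollary as an immediate consequence of Proposition \ref{prop:powerlaw}, by instantiating it with a power law $p_x \propto x^k$ whose exponent sits in the window $-2 < k < -3/2$. On this window Proposition \ref{prop:powerlaw} gives $D(\mu) = \Theta(n^{k+2})$ with $k+2 \in (0,1/2)$, while simultaneously $k < -3/2$ forces $Q(\mu) = \Theta(1)$; this is exactly the kind of separation the corollary asks for, so the only real task is to choose $k$ as a function of $\epsilon$.

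Concretely, given $\epsilon > 0$ I would set $\delta = \min\{\epsilon, 1/4\}$ and take $\mu$ to be the normalised power law with $k = -3/2 - \delta$, so that $-2 < k < -3/2$ no matter how large $\epsilon$ is. Since $-2 < k < -1$, the relevant clause of the $D$-part of Proposition \ref{prop:powerlaw} gives $D(\mu) = \Theta(n^{k+2}) = \Theta(n^{1/2-\delta})$; and because $\delta \le \epsilon$ we have $1/2 - \delta \ge 1/2 - \epsilon$, hence $D(\mu) = \Omega(n^{1/2-\epsilon})$. Since $k < -3/2$, the last clause of the $Q$-part gives $Q(\mu) = \Theta(1) = O(1)$, which completes the argument.

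There is essentially no hard step here: once Proposition \ref{prop:powerlaw} is in hand, the corollary reduces to picking the exponent. The only point needing a moment's care is that the chosen $k$ must stay strictly inside $(-2,-3/2)$ uniformly in $\epsilon$ — otherwise, for $\epsilon$ near $1/2$ one might be tempted to take $k$ close to $-2$, where the $\Theta(n^{k+2})$ estimate degenerates — and capping $\delta$ at $1/4$ (or any fixed value below $1/2$) handles this cleanly, at the harmless cost of only claiming the weaker bound $\Omega(n^{1/4})$ when $\epsilon \ge 1/4$, which still dominates $n^{1/2-\epsilon}$ in that regime.
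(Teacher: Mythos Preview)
Your proof is correct and follows the paper's own approach: instantiate Proposition~\ref{prop:powerlaw} with a power-law exponent $k$ just below $-3/2$. If anything, your version is slightly more careful than the paper's one-line ``take $k=-3/2-\epsilon$'', since your cap $\delta=\min\{\epsilon,1/4\}$ explicitly keeps $k$ inside $(-2,-3/2)$ even when $\epsilon$ is large.
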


\begin{proof}
Take $k = -3/2 - \epsilon$ in Proposition \ref{prop:powerlaw}. (Indeed, any $k \in (-2,-3/2)$ gives a super-exponential separation between $D(\mu)$ and $Q(\mu)$.)
\end{proof}

% ------------------------------------------------------------------------------

\section{Unknown probability distribution}
\label{sec:unknown}

In this section we switch to a different model, where the algorithm does not know the advice distribution $\mu$ in advance, but must use an oracle to obtain information about this distribution. We begin by noting the somewhat counterintuitive fact that a classical algorithm that merely queries $f$ according to samples from the distribution $\mu$ performs no better than an exhaustive search algorithm\footnote{This phenomenon was recently discussed in the somewhat different context of screening for terrorists \cite{press09}.}.

Indeed, consider a classical algorithm that consists of repeatedly obtaining a sample $y$ from $\mu$, then querying $f(y)$. If $x$ is the marked element, the expected number of samples from $\mu$ required until $x$ is found is exactly $1/p_x$ (assuming that $p_x>0$). Thus the expected number of samples used is
\[ \sum_{x=1}^n p_x \left( \frac{1}{p_x} \right) = n; \]
the algorithm might as well have just carried out an exhaustive search to find $x$. Being given access to a {\em quantum} oracle $O_\mu$ producing a coherent superposition corresponding to the distribution $\mu$, however, will turn out to be very useful.

% ------------------------------------------------------------------------------

\subsection{Quantum algorithm}

Our quantum algorithm will be based on the {\em amplitude amplification} primitive of Brassard et al \cite{brassard02}. %Starting in an arbitrary known quantum state $\ket{\psi}$, this operation alternates reflections around $\ket{\psi}$ and an unknown state $\ket{x}$, to rotate within the plane spanned by $\{ \ket{\psi}, \ket{x} \}$.
For completeness, an explicit definition of amplitude amplification is given below, as Algorithm \ref{alg:wamp}.

\begin{algorithm}
\KwIn{Function $f:[n] \rightarrow \{0,1\}$ such that $f$ takes the value 1 on precisely one input $x$; oracle operator $O_\mu:\ket{0} \mapsto \ket{\mu}$; inverse $O_\mu^{-1}$; positive integer $k$ (number of iterations)}
\KwOut{The marked element $x$, or fail}
create initial state $\ket{\mu} = O_\mu\ket{0}$\;
apply operator $-O_\mu I_{\ket{0}} O_\mu^{-1} I_{\ket{x}}$ $k$ times to $\ket{\mu}$\;
%apply operator $(I_{\ket{\mu}} I_{\ket{x}})^k$ to $\ket{\mu}$\;
measure in computational basis, obtaining outcome $y$\;
\eIf{f(y)=1}{
  \Return{y}\;
}{
  \Return{\mbox{fail}}\;
}
\caption{Amplitude amplification \cite{brassard02}\vspace{\baselineskip}}
\label{alg:wamp}
\end{algorithm}

The notation $I_{\ket{\psi}}$ denotes reflection about the state $\ket{\psi}$; it is well-known that the operator $I_{\ket{x}}$, where $I_{\ket{x}}\ket{y} = (-1)^{f(y)} \ket{y}$, can be implemented using one query to $f$.
% second, given access to the operator $O_\mu:\ket{0} \mapsto \ket{\mu}$ and its inverse, it is easy to verify that $I_{\ket{\mu}} = I - 2 \ket{\mu}\bra{\mu}$ can be implemented via the identity
%
%\[ I_{\ket{\mu}} = O_\mu^{-1} I_{\ket{0}} O_\mu. \]
%
The following result was shown by Brassard et al in \cite{brassard02} (with somewhat different terminology).

\begin{lem}
Applying Algorithm \ref{alg:wamp} with $k$ iterations returns the location of the marked element with probability $\sin^2((2k+1)\arcsin{\sqrt{p_x}})$, using $k+1$ queries to $O_\mu$, $k$ queries to $O_{\mu}^{-1}$, and $k+1$ queries to $f$.
\end{lem}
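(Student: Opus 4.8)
The plan is to analyze Algorithm \ref{alg:wamp} by tracking the state in the two-dimensional subspace spanned by $\ket{x}$ and the normalized projection of $\ket{\mu}$ onto the orthogonal complement of $\ket{x}$. First I would write $\ket{\mu} = \sqrt{p_x}\,\ket{x} + \sqrt{1-p_x}\,\ket{x^\perp}$, where $\ket{x^\perp}$ is a unit vector orthogonal to $\ket{x}$, so that $\ket{\mu} = \sin\theta\,\ket{x} + \cos\theta\,\ket{x^\perp}$ with $\theta = \arcsin\sqrt{p_x}$. The key observation is that the operator $Q := -O_\mu I_{\ket{0}} O_\mu^{-1} I_{\ket{x}}$ preserves this two-dimensional subspace: $I_{\ket{x}}$ is a reflection about $\ket{x^\perp}$ within the subspace (it flips the sign of the $\ket{x}$ component), and $-O_\mu I_{\ket{0}} O_\mu^{-1} = -I_{\ket{\mu}}$ is (up to sign) a reflection about $\ket{\mu}$. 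The composition of these two reflections is a rotation by angle $2\theta$ in the plane, so $Q^k$ sends $\ket{\mu}$ to $\sin((2k+1)\theta)\,\ket{x} + \cos((2k+1)\theta)\,\ket{x^\perp}$.

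The main steps in order would be: (1) set up the two-dimensional subspace and the angle $\theta = \arcsin\sqrt{p_x}$; (2) verify that $-O_\mu I_{\ket{0}} O_\mu^{-1} = -I_{\ket{\mu}}$ using $O_\mu\ket{0} = \ket{\mu}$ and that $I_{\ket{0}}$ is a reflection, and that both this operator and $I_{\ket{x}}$ act as reflections on the relevant subspace (sending vectors outside the subspace out of consideration since the initial state lies in it and the subspace is invariant); (3) recall that a product of two reflections in a plane, through lines at angle $\phi$, is a rotation by $2\phi$, and compute that the angle between $\ket{x^\perp}$ and $\ket{\mu}$ is exactly $\theta$, giving rotation by $2\theta$ per iteration; (4) apply the rotation $k$ times to the initial state $\ket{\mu}$ (which sits at angle $\theta$ from $\ket{x^\perp}$) to land at angle $(2k+1)\theta$, so the amplitude on $\ket{x}$ is $\sin((2k+1)\theta)$ and the measurement yields $x$ with probability $\sin^2((2k+1)\arcsin\sqrt{p_x})$; (5) count queries directly from the pseudocode: creating $\ket{\mu}$ costs one $O_\mu$ query, each of the $k$ iterations costs one $O_\mu$, one $O_\mu^{-1}$, and one $f$ query (for $I_{\ket{x}}$), and the final check $f(y)=1$ costs one more $f$ query, totalling $k+1$ queries to $O_\mu$, $k$ to $O_\mu^{-1}$, and $k+1$ to $f$.

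I expect the query count to be entirely routine, and the rotation analysis to be the substantive part — though it is a well-known argument, essentially identical to the standard analysis of Grover's algorithm with an arbitrary initial amplitude in place of $1/\sqrt{n}$. The one point requiring a little care is the sign conventions: making sure the overall minus sign in $-O_\mu I_{\ket{0}} O_\mu^{-1}$ together with the sign flips from the reflections composes to a genuine rotation (rather than a rotation composed with a reflection), and that the initial state is correctly placed at angle $+\theta$ rather than $-\theta$ so that the final amplitude is $\sin((2k+1)\theta)$ with the correct sign inside the square. Since the result is quoted from Brassard et al \cite{brassard02}, I would keep this proof brief, citing the standard two-dimensional reduction and focusing on confirming the stated probability and the precise query tally for this particular phrasing of the algorithm.
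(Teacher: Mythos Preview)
Your proposal is correct and follows the standard two-dimensional rotation analysis of amplitude amplification due to Brassard et al.\ \cite{brassard02}; the query count is read off exactly as you describe. Note, however, that the paper itself does not prove this lemma at all but simply cites it from \cite{brassard02}, so there is no independent argument in the paper to compare against --- your sketch is essentially the proof one finds in that reference.
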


We now use Algorithm \ref{alg:wamp} as a subroutine in an algorithm which finds the marked element with certainty, and takes advantage of $O_\mu$ to reduce the expected number of queries used. The algorithm is a modified version of previous ``exponential searching'' algorithms of Brassard et al \cite{brassard02}, and Boyer et al \cite{boyer98}. The main difference is that the algorithm gives up after a certain number of iterations and reverts to the exact variant of standard Grover search \cite{hoyer00,brassard02}. This change can make a significant difference to the overall query complexity. The algorithm is stated as Algorithm \ref{alg:unknown} below.

\begin{algorithm}[htp]
\KwIn{Function $f:[n] \rightarrow \{0,1\}$ such that $f$ takes the value 1 on precisely one input $x$; oracle operator $O_\mu:\ket{0} \mapsto \ket{\mu}$; inverse $O_\mu^{-1}$; real $k>1$}
\KwOut{The marked element $x$}
\For{$j = 0$ to $\lfloor \log_k \sqrt{n} \rfloor$}{
sample from distribution $\mu$\;
\If{marked element found}{
  \Return{marked element}\;
}
pick $i$ uniformly at random from integers $\{0,\dots,\lfloor k^j\rfloor - 1\}$\;
perform $i$ iterations of amplitude amplification\;
\If{marked element found}{
  \Return{marked element}\;
}
}
perform exact Grover search for one marked element on $[n]$\;
\Return{marked element}\;
\caption{Quantum search with unknown probability distribution\vspace{\baselineskip}}
\label{alg:unknown}
\end{algorithm}

It will turn out to be possible to give a close analysis of the expected query complexity of Algorithm \ref{alg:unknown}, including constants (which we will round to integers; these could be optimised further). The analysis follows the approach taken by Boyer et al \cite{boyer98} to bound the performance of their quantum search algorithm for an unknown number of marked elements.

\newcounter{amp}\setcounter{amp}{\value{thm}}
\newcounter{ampsec}\setcounter{ampsec}{\value{section}}
\begin{prop}
\label{prop:amp}
On input $x$, when called with $k \approx 1.162$, Algorithm \ref{alg:unknown} uses an expected number of at most $\min\{83/\sqrt{p_x} + 4/3,53\sqrt{n} \}$ queries to each of $f$, $O_\mu$, $O_\mu^{-1}$.
\end{prop}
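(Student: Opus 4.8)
The plan is to analyze Algorithm \ref{alg:unknown} by splitting into two regimes according to the size of $p_x$, mirroring the two terms in the claimed bound $\min\{83/\sqrt{p_x}+4/3,\ 53\sqrt{n}\}$. The $53\sqrt{n}$ term is the easy one: regardless of $p_x$, the algorithm's final step is exact Grover search on $[n]$, costing $\lceil\frac{\pi}{4}\sqrt{n}\rceil$ queries by Theorem \ref{thm:grover}, and the total cost of all the preceding rounds $j=0,\dots,\lfloor\log_k\sqrt n\rfloor$ is dominated by the last few rounds since the number of amplitude-amplification iterations $i < k^j$ grows geometrically; summing $\sum_{j} k^j$ up to $k^{\log_k\sqrt n}=\sqrt n$ gives a geometric series bounded by $\frac{k}{k-1}\sqrt n$, and with $k\approx 1.162$ one checks the constant works out below $53$. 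This handles the worst case and hence the claim whenever $83/\sqrt{p_x}+4/3 \ge 53\sqrt n$, i.e.\ when $p_x$ is small.

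For the main term, fix $x$ with $p_x$ not too small and let $\theta = \arcsin\sqrt{p_x}$. The key fact (from the cited lemma) is that after $i$ iterations of amplitude amplification starting from $\ket\mu$, the success probability is $\sin^2((2i+1)\theta)$. Following Boyer et al, the crucial round is the first $j$ with $k^j \gtrsim 1/\sin\theta \approx 1/\sqrt{p_x}$ (call it $j_x$): for that $j$, picking $i$ uniformly in $\{0,\dots,\lfloor k^j\rfloor-1\}$ makes the "angle" $(2i+1)\theta$ roughly uniform over a range of length $\gtrsim \pi$, so the expected value of $\sin^2((2i+1)\theta)$ is bounded below by a constant (the standard estimate gives $\ge 1/4$ once $k^j \ge 1/\sin(2\theta)$ or so). Thus round $j_x$ succeeds with constant probability; conditioned on reaching it, the cost incurred in rounds $0,\dots,j_x$ is $\sum_{j\le j_x} O(k^j) = O(k^{j_x}) = O(1/\sqrt{p_x})$ by the geometric sum, plus the $j_x+1 = O(\log(1/\sqrt{p_x}))$ sampling queries. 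If round $j_x$ fails we pay for $j_x$ again in expectation (geometric number of "attempts" at the constant-probability event, but note each attempt is a fresh later round, not a re-run — one must be slightly careful here and instead argue that the probability of reaching round $j_x + \ell$ decays geometrically in $\ell$ while the cost grows geometrically, so the product summed over $\ell$ still telescopes to $O(k^{j_x})$). Collecting constants — the $\frac{k}{k-1}$ from the geometric sum, the $\le 4$ from Markov-type failure probabilities, the $\pi/4$ and $\lceil\cdot\rceil$ overheads, and the constant lower bound on the per-round success probability — yields the bound $83/\sqrt{p_x}+4/3$; the additive $4/3$ absorbs the small-$i$ rounds and rounding of ceilings.

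The main obstacle will be the bookkeeping of constants: getting from the clean asymptotic argument to the explicit $83$ and $53$ requires tracking the geometric-series constant $k/(k-1)$ (which is why $k\approx 1.162$ rather than a larger ratio is chosen — a smaller $k$ shrinks $k/(k-1)\cdot$ but lengthens the loop, so there is a genuine optimization), the exact constant in the lower bound $\E_i[\sin^2((2i+1)\theta)] \ge c$, and the interaction between the $\lfloor k^j\rfloor$ floors and the need for $k^{j_x}$ to actually exceed $1/\sin(2\theta)$. I would isolate the per-round success-probability lemma first (a self-contained trigonometric estimate, essentially the content of the Boyer et al analysis adapted to ratio $k$ instead of $6/5$), then the geometric-cost lemma, and only then assemble the expectation, deferring the arithmetic verification of the constants to a remark or to direct numerical check as the paper does elsewhere.
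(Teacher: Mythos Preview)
Your approach is essentially the same as the paper's: worst-case geometric sum for the $53\sqrt{n}$ bound, then for the main term split the expectation at the critical round $j_x \approx \log_k(1/\sqrt{p_x})$, use the Boyer et al estimate $P_m \ge 1/4$ for later rounds, and sum the resulting $(3/4)^{\ell} k^{j_x+\ell}$ series.

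Two technical points you are missing and will need when you do the bookkeeping. First, the series $\sum_\ell (3/4)^\ell k^{j_x+\ell}$ only converges if $3k/4<1$, i.e.\ $k<4/3$; this is the real upper constraint on $k$ (the paper's bound has a factor $4k^2/((k-1)(4-3k))$, and the optimum $k\approx 1.162$ balances the $1/(k-1)$ and $1/(4-3k)$ blow-ups). Second, your approximation $1/\sin(2\theta)\approx 1/\sqrt{p_x}$ fails when $p_x$ is close to $1$, since then $\theta\approx\pi/2$ and $\sin(2\theta)\to 0$, so the threshold round $j_x$ is not $O(\log_k(1/\sqrt{p_x}))$. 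The paper handles the regime $p_x\ge 3/4$ separately, using only the ``sample from $\mu$'' step in each round (which already succeeds with probability $\ge 3/4$) and ignoring amplitude amplification entirely; this is where the algorithm's sampling step actually earns its keep, and it gives an $O(1)$ bound in that case. With these two fixes your outline goes through, and the explicit constants fall out exactly as the paper records them.
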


\begin{proof}
Deferred to Appendix.
\end{proof}

\begin{cor}
\label{cor:qstar}
Let $\mathcal{A}$ denote Algorithm \ref{alg:unknown}. Then there are constants $K$, $L$, $M$ such that
\[ T_{\mathcal{A}}^*(\mu) \le K \left( \sum_{x,p_x > 1/n} \sqrt{p_x} \right) + L \sqrt{n} \left( \sum_{x,p_x \le 1/n} p_x \right) + M. \]
\end{cor}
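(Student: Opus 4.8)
The plan is to derive Corollary \ref{cor:qstar} directly from Proposition \ref{prop:amp} by summing the per-input bound over the distribution $\mu$. Recall that $T_{\mathcal{A}}^*(\mu) = \sum_{x=1}^n p_x T_{\mathcal{A}}^*(x)$, where $T_{\mathcal{A}}^*(x)$ is the expected number of queries (to each of $f$, $O_\mu$, $O_\mu^{-1}$) used on input $x$. Proposition \ref{prop:amp} gives $T_{\mathcal{A}}^*(x) \le \min\{83/\sqrt{p_x} + 4/3, 53\sqrt{n}\}$, so the natural move is to split the sum according to which term in the minimum is smaller --- or, more conveniently, simply to split at the threshold $p_x = 1/n$, since $1/\sqrt{p_x} \le \sqrt{n}$ exactly when $p_x \ge 1/n$.

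First I would write
\[ T_{\mathcal{A}}^*(\mu) = \sum_{x : p_x > 1/n} p_x T_{\mathcal{A}}^*(x) + \sum_{x : p_x \le 1/n} p_x T_{\mathcal{A}}^*(x). \]
For the first sum, bound $T_{\mathcal{A}}^*(x) \le 83/\sqrt{p_x} + 4/3$, so that $p_x T_{\mathcal{A}}^*(x) \le 83 \sqrt{p_x} + (4/3) p_x$; summing gives $83 \sum_{x, p_x > 1/n} \sqrt{p_x} + (4/3) \sum_{x, p_x > 1/n} p_x \le 83 \sum_{x, p_x > 1/n} \sqrt{p_x} + 4/3$, using $\sum_x p_x \le 1$. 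For the second sum, bound $T_{\mathcal{A}}^*(x) \le 53\sqrt{n}$, giving $\sum_{x, p_x \le 1/n} p_x T_{\mathcal{A}}^*(x) \le 53 \sqrt{n} \sum_{x, p_x \le 1/n} p_x$. Adding the two contributions yields the claimed form with $K = 83$, $L = 53$, $M = 4/3$ (any valid constants suffice, since the statement only asserts existence).

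There is essentially no obstacle here: the only point requiring a word of care is the observation that for $p_x > 1/n$ the bound $83/\sqrt{p_x} + 4/3$ is the one to use (it is finite and small), while for $p_x \le 1/n$ the bound $53\sqrt{n}$ is the relevant one, so that each half of the split uses the appropriate branch of the minimum; and the trivial fact that the tail sums $\sum_{x, p_x > 1/n} p_x$ and $\sum_{x, p_x \le 1/n} p_x$ are each at most $1$. If one wanted the constants $M$ to genuinely be absorbed, one could even note $\sum_{x, p_x \le 1/n} p_x \le 1$ makes the whole second term $O(\sqrt n)$, but keeping it as stated is cleaner. All the real work is in Proposition \ref{prop:amp}, which is deferred to the appendix.
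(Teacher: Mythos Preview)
Your proposal is correct and is precisely the argument the paper has in mind: it simply cites Proposition \ref{prop:amp} and averages over $\mu$, which is exactly the split-at-$p_x=1/n$ computation you wrote out. You have just made explicit what the paper leaves implicit.
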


\begin{proof}
Apply Algorithm \ref{alg:unknown}, and use Proposition \ref{prop:amp} to calculate the average number of queries used with respect to the distribution $\mu$.
\end{proof}

% ------------------------------------------------------------------------------

\subsection{Power law distributions}

As with the case of a known probability distribution, power law distributions provide a natural class of examples for search with an unknown probability distribution. For some of these distributions, Algorithm \ref{alg:unknown} can be used to obtain significant speed-ups over any classical algorithm, even one with complete knowledge of the distribution.

\newcounter{powerlawtwo}\setcounter{powerlawtwo}{\value{thm}}
\newcounter{powerlawtwosec}\setcounter{powerlawtwosec}{\value{section}}
\begin{prop}
\label{prop:powerlaw2}
Let $\mu = (p_x)$ be a probability distribution where $p_x \propto x^k$ for some constant $k<0$, and let $\mathcal{A}$ denote Algorithm \ref{alg:unknown}. Then
\begin{equation*}
T^*_{\mathcal{A}}(\mu) = \left\{ 
\begin{array}{ll}
  O(\sqrt{n}) &\,\,[-1 \le k < 0] \\
  O(n^{-(1/2+1/k)}) &\,\,[-2 < k < -1] \\
  O(\log n) &\,\,[k=-2] \\
  O(1) &\,\,[k<-2]
  \end{array}
\right.
\end{equation*}
\end{prop}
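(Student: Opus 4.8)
The plan is to apply Corollary \ref{cor:qstar}, which already reduces the problem to estimating the two sums
\[ S_1 = \sum_{x:\,p_x > 1/n} \sqrt{p_x}, \qquad S_2 = \sqrt{n} \sum_{x:\,p_x \le 1/n} p_x \]
for the specific family $p_x = \alpha x^k$ with $k<0$. First I would pin down the normalising constant $\alpha$ by estimating $1/\alpha = \sum_{x=1}^n x^k$ with an integral, exactly as in the commented-out proof of Proposition \ref{prop:powerlaw}: for $-1<k<0$ this gives $\alpha = \Theta(n^{-(k+1)})$, for $k=-1$ it gives $\alpha = \Theta(1/\log n)$, and for $k<-1$ it gives $\alpha = \Theta(1)$. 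Next I would determine the threshold index $x_0$ at which $p_x$ crosses $1/n$, i.e.\ solve $\alpha x_0^k = 1/n$, giving $x_0 = (\alpha n)^{-1/k}$ (when this lies in $[1,n]$; otherwise one of the two sums is empty and the analysis simplifies). Since $p_x$ is decreasing in $x$ (as $k<0$), the first sum ranges over $x \le x_0$ and the second over $x > x_0$.

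With $\alpha$ and $x_0$ in hand, each sum is a power sum that I would bound above and below by integrals. For $S_1$ we get $S_1 = \sqrt{\alpha}\sum_{x\le x_0} x^{k/2} = \Theta\big(\sqrt{\alpha}\, x_0^{k/2+1}\big)$ provided $k/2+1>0$, i.e.\ $k>-2$; for $k\le -2$ the exponent $k/2$ is $\le -1$ and the sum is $\Theta(\sqrt{\alpha})$ (dominated by the $x=1$ term, up to a log at $k=-2$). For $S_2$ we get $S_2 = \sqrt{n}\,\alpha \sum_{x>x_0} x^{k} = \Theta\big(\sqrt{n}\,\alpha\, x_0^{k+1}\big)$ when $k>-1$, and $\Theta(\sqrt{n}\,\alpha\,n^{k+1})$ behaviour otherwise — but since $k+1<0$ in the interesting regime, one should be careful: for $k<-1$ the tail sum $\sum_{x>x_0} x^k$ is $\Theta(x_0^{k+1})$ again (the sum converges and is dominated by its first term $x_0^k$, with the geometric-type decay making the total $\Theta(x_0^{k+1})$ only up to constants — actually it is $\Theta(x_0^{k+1})$ since $\sum_{x\ge x_0}x^k \approx \int_{x_0}^\infty x^k dx = x_0^{k+1}/|k+1|$). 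Then I would substitute $x_0 = (\alpha n)^{-1/k}$ and the appropriate value of $\alpha$ into both expressions and simplify, case by case on $k$, checking in each regime which of $S_1$, $S_2$ dominates and matching against the claimed $O(\sqrt{n})$, $O(n^{-(1/2+1/k)})$, $O(\log n)$, $O(1)$.

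As a consistency check I would verify the boundary cases: at $k=-1$ one expects the $O(\sqrt n)$ branch and the $O(n^{-(1/2+1/k)})$ branch to agree (indeed $n^{-(1/2+1/k)} \to n^{1/2}$ as $k\to -1^-$), and at $k=-2$ the exponent $-(1/2+1/k) \to 0$ so the power-law branch degenerates to the $O(\log n)$ branch, which should come out of the $k=-2$ integral producing a logarithm. I would also sanity-check $k\to 0^-$ (both sums should be $O(\sqrt n)$, consistent with $Q(\mu)=\Theta(\sqrt n)$ for the uniform-like regime) and $k<-2$ (here $\alpha=\Theta(1)$, $x_0=\Theta(1)$, so $S_1 = \Theta(1)$ and the tail $S_2$ is also $\Theta(\sqrt n \cdot n^{k+1}) = \Theta(n^{k+3/2}) = o(1)$, giving $O(1)$ overall). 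The main obstacle is purely bookkeeping: keeping the integral estimates' directions straight when exponents are negative (so that the sum is dominated by its \emph{smallest} index rather than behaving like the integral over $[1,n]$), and correctly identifying which regime leaves $x_0$ inside $[1,n]$ versus clamped to an endpoint — in the clamped subcases one of $S_1,S_2$ vanishes and the other must still be shown to meet the stated bound. Once the casework is organised around the two quantities $(\alpha, x_0)$, each individual estimate is a one-line integral comparison, and I would, like Proposition \ref{prop:powerlaw}, defer the full computation to the appendix.
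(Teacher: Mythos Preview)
Your plan is essentially the paper's own proof: apply Corollary~\ref{cor:qstar}, estimate the normalising constant $\alpha$ by an integral (as in Proposition~\ref{prop:powerlaw}), locate the crossover index $x_0=\lfloor(\alpha n)^{-1/k}\rfloor$, bound the two resulting power sums by integrals, and then split into the cases $k>-1$ versus $k<-1$ (where $x_0$ is $\Theta(n)$ and $\Theta(n^{-1/k})$ respectively), handling $k=-1,-2$ separately. One small slip in your sanity check: for $k<-2$ you still have $x_0=\Theta(n^{-1/k})$, not $\Theta(1)$; your conclusion $S_1=\Theta(1)$ survives anyway because $\sum_{x\ge 1}x^{k/2}$ converges when $k<-2$, and $S_2=\Theta(\sqrt{n}\,x_0^{k+1})=\Theta(n^{-1/2-1/k})=o(1)$, so the $O(1)$ bound is unaffected.
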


\begin{proof}
Deferred to Appendix.
\begin{comment}
As in the proof of Proposition \ref{prop:powerlaw}, $p_x = \alpha x^k$ for some $\alpha$, and for $k \neq -1$,
%
\[ \alpha \le \frac{k+1}{n^{k+1}-1}. \]
%
Define $x_0 = \max \{x \in [n]: p_x \ge 1/n \} = \lfloor(\alpha n)^{-1/k}\rfloor$. Then, by Corollary \ref{cor:qstar},
%
\[ Q^*(\mu) \le K \sqrt{\alpha} \sum_{x=1}^{x_0} x^{k/2} + L\,\alpha \sqrt{n} \sum_{x=x_0+1}^{n} x^k + M, \]
%
for some constants $K$, $L$, $M$, implying that for $k \neq -1$, $k \neq -2$,
%
\beas Q^*(\mu) &\le& K \sqrt{\alpha} \left(1 + \int_1^{x_0} x^{k/2}\,dx\right) + L\,\alpha \sqrt{n} \int_{x_0}^{n} x^k\,dx + M\\
&\le& K \sqrt{\alpha}\left(1 - \frac{2}{k+2} + \frac{2 x_0^{k/2+1}}{k+2} \right) + L \frac{\alpha\sqrt{n}}{k + 1}(n^{k+1} - x_0^{k+1}) + M\\
&\le& K \sqrt{\frac{k+1}{n^{k+1}-1}} \left(1 - \frac{2}{k+2} + \frac{2 x_0^{k/2+1}}{k+2} \right) + L \frac{\sqrt{n}(n^{k+1}-x_0^{k+1})}{n^{k+1}-1} + M.
\eeas
%
Now note that
%
\[ x_0 \approx \left(\frac{n(k+1)}{n^{k+1}-1}\right)^{-1/k} \]
%
is $\Theta(n)$ for $k>-1$, and $\Theta(n^{-1/k})$ for $k < -1$. Inserting this into the previous expression we obtain the claimed results for the cases $k\neq -1$, $k \neq -2$. These remaining special cases can be verified directly.
\end{comment}
\end{proof}

% ------------------------------------------------------------------------------

\section*{Acknowledgements}

This work was supported by the EC-FP6-STREP network QICS. I would like to thank Aram Harrow for pointing out reference \cite{press09}.

% ------------------------------------------------------------------------------

\section*{Appendix}

In this appendix we collect some proofs from throughout the paper.

\setcounter{subsection}{0}
\renewcommand{\thesubsection}{A.\arabic{subsection}}

\subsection{Proofs from Section \ref{sec:known}}

\setcounter{section}{\value{powerlawsec}}
\setcounter{thm}{\value{powerlaw}}
\begin{prop}
Let $\mu = (p_x)$, $x \in [n]$ be a probability distribution where $p_x \propto x^k$ for some constant $k<0$. Then
\begin{equation*}
\begin{array}{cc}
D(\mu) = \left\{ 
\begin{array}{ll}
 \Theta(n) &\,\,[-1 < k < 0] \\
 \Theta(n/\log n) &\,\,[k=-1] \\
 \Theta(n^{k+2}) &\,\,[-2 < k < -1] \\
 \Theta(\log n) &\,\,[k=-2] \\
 \Theta(1) &\,\,[k<-2]
\end{array}
\right.
&, \mbox{ and }
Q(\mu) = \left\{ 
\begin{array}{ll}
  \Theta(\sqrt{n}) &\,\,[-1 < k < 0] \\
  \Theta(\sqrt{n}/\log n) &\,\,[k=-1] \\
  \Theta(n^{k+3/2}) &\,\,[-3/2 < k < -1] \\
  \Theta(\log n) &\,\,[k=-3/2] \\
  \Theta(1) &\,\,[k<-3/2]
  \end{array}
\right.
\end{array}
\end{equation*}
\end{prop}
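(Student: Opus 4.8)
The plan is to reduce the entire proposition to estimating two sums, $\sum_{x=1}^n x^{k+1}$ and $\sum_{x=1}^n x^{k+1/2}$, each normalised by $\sum_{x=1}^n x^k$. Writing $p_x = \alpha x^k$, the constraint $\sum_x p_x = 1$ forces $1/\alpha = \sum_{x=1}^n x^k$. First I would pin down $\alpha$ by sandwiching this sum between integrals: since $x^k$ is decreasing for $k<0$, $\int_1^{n+1} x^k\,dx \le \sum_{x=1}^n x^k \le 1 + \int_1^n x^k\,dx$, which gives $1/\alpha = \Theta(n^{k+1})$ for $-1 < k < 0$, $1/\alpha = \Theta(\log n)$ for $k=-1$, and $1/\alpha = \Theta(1)$ for $k<-1$.

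For the classical half, I would invoke the exact formula $D(\mu) = \sum_{x=1}^n p_x\,x = \alpha \sum_{x=1}^n x^{k+1}$ from (\ref{eqn:deterministic}). The same integral comparison gives $\sum_{x=1}^n x^{k+1} = \Theta(n^{k+2})$ for $k > -2$, $\Theta(\log n)$ for $k=-2$, and $\Theta(1)$ for $k<-2$. Dividing by the estimate for $1/\alpha$ in each of the five ranges then reproduces the stated table: e.g.\ for $-1<k<0$ one gets $\Theta(n^{k+2}/n^{k+1}) = \Theta(n)$, for $k=-1$ one gets $\Theta(n/\log n)$, for $-2<k<-1$ one has $\alpha = \Theta(1)$ so $D(\mu) = \Theta(n^{k+2})$, and so on.

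For the quantum half, the crucial input is that $Q(\mu) = \Theta\!\left(\sum_{x=1}^n p_x\sqrt{x}\right)$: the upper bound is Proposition \ref{prop:average}, and the matching lower bound is the optimality proposition $Q(\mu) \ge 0.206\sum_x p_x\sqrt{x} - 1$, where in the regimes with $\sum_x p_x\sqrt{x} = O(1)$ the additive constant is absorbed using the trivial bound $Q(\mu) \ge 1$. Hence $Q(\mu) = \Theta(\alpha \sum_{x=1}^n x^{k+1/2})$, and the analogous integral estimate gives $\sum_{x=1}^n x^{k+1/2} = \Theta(n^{k+3/2})$ for $k>-3/2$, $\Theta(\log n)$ for $k=-3/2$, $\Theta(1)$ for $k<-3/2$. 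Combining with the estimate for $\alpha$ range by range yields the table for $Q(\mu)$; the phase transition now sits at $k=-3/2$ instead of $k=-2$ precisely because of the $\tfrac12$ shift in the exponent of the summand.

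The computation is routine; the only place that needs genuine care is the bookkeeping around the three boundary values $k=-1,-2,-3/2$ — making sure that in $\frac{n^{k+c}-1}{k+c}$ one correctly identifies whether the dominant contribution is $n^{k+c}$ (when $k+c>0$) or the constant $1$ (when $k+c<0$), so that the logarithmic estimate $\int_1^n x^{-1}\,dx = \ln n$ is triggered at exactly the right exponents — together with checking that dividing one two-sided $\Theta$-estimate by another is legitimate because both implicit constants depend only on $k$. I do not expect any step to present a real obstacle.
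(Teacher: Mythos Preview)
Your proposal is correct and follows essentially the same approach as the paper: write $p_x = \alpha x^k$, estimate $1/\alpha$ and the sums $\sum x^{k+1}$, $\sum x^{k+1/2}$ by integral comparison, and combine with (\ref{eqn:deterministic}) for $D(\mu)$ and with Propositions~\ref{prop:average} and the optimality lower bound for $Q(\mu)$. If anything, you are slightly more explicit than the paper about invoking the matching lower bound $Q(\mu)\ge 0.206\sum_x p_x\sqrt{x}-1$ and about absorbing the additive constant via $Q(\mu)\ge 1$ in the $O(1)$ regimes; the paper simply says ``the lower bound is similar'' and leaves the boundary exponents to direct verification.
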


\begin{proof}
From the statement of the proposition, $p_x = \alpha x^k$ for some $\alpha$. We first estimate the normalising constant $\alpha$. The constraint that $\sum_{x=1}^n p_x = 1$ implies that $1/\alpha = \sum_{x=1}^n x^k$. Estimating this sum by an integral, we have that 
\[ \int_1^n x^k\,dx \le 1/\alpha \le 1 + \int_1^n x^k\,dx, \]
implying that, for $k \neq -1$,
\[ \frac{n^{k+1}-1}{k+1} \le 1/\alpha \le \frac{n^{k+1}-1}{k+1} + 1. \]
By (\ref{eqn:deterministic}), if it also holds that $k \neq -2$,
\[ D(\mu) = \alpha \sum_{x=1}^n x^{k+1} \ge \alpha \int_1^n x^{k+1}\,dx \ge \frac{(n^{k+2}-1)(k+1)}{(n^{k+1}+k)(k+2)}. \]
The upper bound on $D(\mu)$ is very similar. It is easy to see that this proves the deterministic half of the proposition, except for the cases $k=-1$ and $k=-2$, which can be verified directly. In the quantum case, by Proposition \ref{prop:average}, for $k \neq -1,-3/2$,
\[ Q(\mu) \le \alpha \pi e \sum_{x=1}^n x^{k+1/2} \le \alpha \pi e \left(1+ \int_{1}^{n+1} x^{k+1/2}\,dx\right) = \pi e \left(\frac{((n+1)^{k+3/2}+k+1/2)(k+1)}{(n^{k+1}-1)(k+3/2)} \right). \]
Again, the lower bound is similar and the special cases $k = -1$, $k=-3/2$ can be verified directly.
\end{proof}

% ------------------------------------------------------------------------------

\subsection{Proofs from Section \ref{sec:unknown}}

In this section, we give the proof of Proposition \ref{prop:amp}, which bounds the performance of Algorithm \ref{alg:unknown}. We will need a lemma of Boyer et al \cite{boyer98}, which we translate into our terminology.

\setcounter{thm}{0}
\renewcommand{\thethm}{A.\arabic{thm}}

\begin{lem}[Boyer et al \cite{boyer98}]
\label{lem:uniformamp}
If an integer $r$ is picked from the range $\{0,\dots,m-1\}$ uniformly at random, and $r$ iterations of amplitude amplification are performed, the probability of finding the marked element is exactly
\[ P_m = \frac{1}{2} - \frac{\sin(4m \arcsin \sqrt{p_x})}{8m\sqrt{p_x(1-p_x)}} .\]
In particular, $P_m \ge 1/4$ whenever
\[ m \ge \frac{1}{2\sqrt{p_x(1-p_x)}}. \]
\end{lem}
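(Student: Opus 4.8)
The plan is to compute $P_m$ directly: average the success probability of amplitude amplification over the uniformly random iteration count $r$, then reduce the resulting trigonometric sum to closed form.

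Write $\theta = \arcsin\sqrt{p_x}$, so $\sin\theta = \sqrt{p_x}$ and $\cos\theta = \sqrt{1-p_x}$. By the preceding lemma of Brassard et al, performing $r$ iterations of amplitude amplification succeeds with probability $\sin^2((2r+1)\theta)$; since $r$ is uniform on $\{0,\dots,m-1\}$,
\[ P_m = \frac{1}{m}\sum_{r=0}^{m-1}\sin^2((2r+1)\theta). \]
Using $\sin^2 y = \tfrac12(1-\cos 2y)$ rewrites this as $\tfrac12 - \frac{1}{2m}\sum_{r=0}^{m-1}\cos((4r+2)\theta)$.

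Next I would evaluate $\sum_{r=0}^{m-1}\cos((4r+2)\theta)$ via the standard identity $\sum_{r=0}^{m-1}\cos(a+rd) = \frac{\sin(md/2)}{\sin(d/2)}\cos\!\left(a+\tfrac{(m-1)d}{2}\right)$ with $a=2\theta$, $d=4\theta$ (proved by taking the real part of a geometric series in $e^{id}$). This gives $\frac{\sin(2m\theta)}{\sin(2\theta)}\cos(2m\theta) = \frac{\sin(4m\theta)}{2\sin(2\theta)}$ after one application of the double-angle formula. Substituting back yields $P_m = \tfrac12 - \frac{\sin(4m\theta)}{4m\sin(2\theta)}$, and since $\sin(2\theta) = 2\sin\theta\cos\theta = 2\sqrt{p_x(1-p_x)}$ this is exactly the claimed formula.

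For the second statement I would simply bound $|\sin(4m\theta)|\le 1$, obtaining $P_m \ge \tfrac12 - \frac{1}{8m\sqrt{p_x(1-p_x)}}$, which is at least $1/4$ precisely when $m \ge \frac{1}{2\sqrt{p_x(1-p_x)}}$. There is no real obstacle here; the only place demanding care is bookkeeping of the factors of $2$ and $4$ in the trigonometric sum, so I would sanity-check the computation by verifying that $m=1$ recovers $P_1 = \sin^2\theta = p_x$, as it must since then $r=0$ deterministically.
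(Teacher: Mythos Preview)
Your proof is correct. The paper does not actually supply a proof of this lemma: it is quoted directly from Boyer et al.\ \cite{boyer98} and used as a black box in the analysis of Algorithm~\ref{alg:unknown}. What you have written is essentially the original argument from that reference---averaging the success probability $\sin^2((2r+1)\theta)$ over $r$, collapsing the cosine sum via a geometric series, and then bounding $|\sin(4m\theta)|\le 1$---so there is nothing to compare against within this paper, and your derivation matches the standard one.
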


We are now ready to prove Proposition \ref{prop:amp}. The proof is similar to a result of Boyer et al \cite{boyer98}, but with somewhat more detail.

\renewcommand{\thethm}{\arabic{section}.\arabic{thm}}

\setcounter{section}{\value{ampsec}}
\setcounter{thm}{\value{amp}}
\begin{prop}
On input $x$, when called with $k \approx 1.162$, Algorithm \ref{alg:unknown} uses an expected number of at most $\min\{83/\sqrt{p_x} + 4/3,53\sqrt{n} \}$ queries to each of $f$, $O_\mu$, $O_\mu^{-1}$.
\end{prop}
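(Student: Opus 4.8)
The goal is to bound the expected number of queries that Algorithm \ref{alg:unknown} makes on input $x$, where the two regimes of the minimum correspond to the two natural ways the algorithm can terminate: either the amplitude-amplification loop finds $x$ (which happens quickly if $p_x$ is not too small), or the loop runs to completion and the final exact Grover search on $[n]$ is invoked (which always succeeds using $O(\sqrt{n})$ queries). I would first set up notation: write $\theta = \arcsin\sqrt{p_x}$, let $j^* = \lfloor \log_k\sqrt{n}\rfloor$ be the last loop index, and let $m_j = \lfloor k^j\rfloor$ be the size of the range from which $i$ is drawn in iteration $j$. Each pass through the loop body costs: one query to $O_\mu$ for the sample step, plus $i+1$ queries to each of $f$, $O_\mu$, $O_\mu^{-1}$ for the amplitude amplification step with $i$ iterations (by the Lemma following Algorithm \ref{alg:wamp}); since $i \le m_j - 1 < k^j$, iteration $j$ costs at most $k^j + O(1)$ queries to each oracle.

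**Bounding the $53\sqrt n$ branch.** This is the easy bound and I would do it first. Regardless of $p_x$, the worst case is that every loop iteration fails and we fall through to the final exact Grover search, costing $\lceil \tfrac{\pi}{4}\sqrt n\rceil$ queries to $f$ by Theorem \ref{thm:grover}. The total cost of the loop itself, summed over $j = 0$ to $j^*$, is at most $\sum_{j=0}^{j^*}(k^j + C) \le \frac{k^{j^*+1}}{k-1} + C(j^*+1)$; since $k^{j^*} \le \sqrt n$, this is $\frac{k}{k-1}\sqrt n + O(\log n)$. Adding the $\frac{\pi}{4}\sqrt n$ from the final Grover step and plugging in $k \approx 1.162$ (so $\frac{k}{k-1} \approx 7.17$) gives a bound of the form $c\sqrt n$ with $c \le 53$ once the $O(\log n)$ and additive terms are absorbed; the stated constant leaves comfortable slack for this.

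**Bounding the $83/\sqrt{p_x} + 4/3$ branch.** This is the substantive part. Here I would argue that once the loop index $j$ is large enough that $m_j \ge \frac{1}{2\sqrt{p_x(1-p_x)}}$ — call the smallest such index $j_0$, so $k^{j_0} = \Theta(1/\sqrt{p_x})$ — Lemma \ref{lem:uniformamp} guarantees each subsequent iteration finds $x$ with probability at least $1/4$ (note also that the sample step at the top of each iteration only helps). So for $j \ge j_0$ the algorithm behaves like a geometric process: the probability of still running after reaching iteration $j_0 + \ell$ is at most $(3/4)^\ell$. I would split the expected cost as (cost incurred in iterations $j < j_0$) $+$ (cost incurred in iterations $j \ge j_0$, weighted by the probability of reaching them). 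The first part is $\sum_{j=0}^{j_0-1}(k^j + C) \le \frac{k}{k-1}k^{j_0-1} + O(\log(1/p_x)) = O(1/\sqrt{p_x})$. For the second part, the cost of iteration $j_0+\ell$ is at most $k^{j_0+\ell} + C$, incurred with probability at most $(3/4)^\ell$, so the contribution is at most $\sum_{\ell \ge 0}(3/4)^\ell(k^{j_0+\ell}+C) = k^{j_0}\sum_{\ell\ge 0}(3k/4)^\ell + \frac{C}{1-3/4}$. The key quantitative point is that $3k/4 < 1$ precisely because $k \approx 1.162 < 4/3$ — this is exactly why the particular value of $k$ is chosen — so the geometric series converges, giving $O(k^{j_0}) = O(1/\sqrt{p_x})$. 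Tracking the constants (the exact value of $k^{j_0}$ in terms of $1/\sqrt{p_x}$, the factor $\frac{1}{1-3k/4}$, the per-iteration additive overheads, and the one free sample at the start of iteration $j=0$ which contributes the $4/3$) yields the stated bound $83/\sqrt{p_x} + 4/3$.

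**Main obstacle.** The conceptual steps are routine; the real work is bookkeeping the constants so that they close at $83$ and $53$ with the specific choice $k \approx 1.162$. The delicate point is the interplay in the condition $3k/4 < 1$: $k$ must be large enough that the loop ranges grow fast enough to reach the ``good'' regime $m_j \gtrsim 1/\sqrt{p_x}$ in only $O(\log(1/p_x))$ iterations (favoring large $k$), yet small enough that the geometric sum $\sum (3k/4)^\ell$ converges with a manageable constant (favoring small $k$). I would present $k \approx 1.162$ as the value making these constraints compatible with room to spare, handle the floors $\lfloor k^j\rfloor$ by the crude bounds $k^j - 1 \le \lfloor k^j\rfloor \le k^j$, replace all discrete sums by the geometric-series estimates above, and carry out the constant arithmetic once at the end.
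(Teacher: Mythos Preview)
Your approach is essentially the paper's: a crude worst-case sum for the $53\sqrt{n}$ branch, and a threshold-then-geometric-decay argument via Lemma~\ref{lem:uniformamp} for the $83/\sqrt{p_x}$ branch. There is, however, one genuine gap. Your claim that $k^{j_0} = \Theta(1/\sqrt{p_x})$ relies on $1-p_x$ being bounded away from $0$, because the threshold in Lemma~\ref{lem:uniformamp} is $\tfrac{1}{2\sqrt{p_x(1-p_x)}}$, not $\tfrac{1}{2\sqrt{p_x}}$. When $p_x$ is close to $1$ (e.g.\ $p_x = 1-1/n$) that threshold is $\Theta(\sqrt{n})$, so your $j_0$ satisfies $k^{j_0} = \Theta(\sqrt{n})$ and your bound degrades to $O(\sqrt{n})$ where $O(1)$ is required. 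The paper deals with this by restricting the main analysis to $p_x \le 3/4$ and handling $p_x \ge 3/4$ separately, using the sampling step at the top of each iteration (which succeeds with probability $p_x \ge 3/4$ each time) to get a constant bound. You note that the sampling step ``only helps'' but never actually invoke it; without it, the large-$p_x$ regime does not close.

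Two smaller points. First, your attribution of the additive $4/3$ to ``the one free sample at the start of iteration $j=0$'' is not where it comes from: in the paper's accounting it arises from the probability-weighted cost of the fall-through to the final exact Grover search, a term your infinite-loop upper bound absorbs implicitly but does not isolate. Second, the value $k \approx 1.162$ is not merely a convenient choice satisfying $k < 4/3$; it is the minimiser of the explicit leading constant $\tfrac{k}{k-1} + \tfrac{4k^2}{(k-1)(4-3k)} + \tfrac{4k}{3(k-1)} + \tfrac{\pi}{3}$ that the paper derives, and without carrying out that optimisation you will not arrive at $83$.
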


\begin{proof}
We upper bound the expected number of queries to $f$ used by Algorithm \ref{alg:unknown}, which implies the same bound on the number of queries to $O_\mu$ and $O_\mu^{-1}$. The bound will be in terms of $k$, and eventually minimised over $k$ such that $1 < k < 4/3$. However, changing $k$ will only change the number of queries used by a constant factor. Let $T_j$ denote the expected number of queries to $f$ used if the marked element is found in the $j$'th iteration of the loop. Then
\[ T_j = \sum_{i=0}^{j-1} \frac{\lfloor k^i \rfloor + 3}{2} \le \frac{k^j}{k-1}, \]
an inequality which holds for $1 < k < 4/3$ and can be proven by induction on $j$.

We first find an upper bound by considering the {\em worst-case} number of queries used. If it has not been found previously, the marked element is guaranteed to be found in the last, exact Grover search step. Thus the number of queries to $f$ used is at most
\[ \sum_{j=1}^{\lfloor \log_k \sqrt{n} \rfloor + 1} T_j + \left\lceil \frac{\pi}{4} \sqrt{n} \right\rceil \le \frac{1}{k-1} \sum_{j=1}^{\lfloor \log_k \sqrt{n} \rfloor + 1} k^j + \left\lceil \frac{\pi}{4} \sqrt{n} \right\rceil \le \left( \frac{k^2}{(k-1)^2} + \frac{\pi}{4} \right) \sqrt{n}. \]
This deals with one half of the statement of the proposition. For the remainder of the proof, we restrict to the case $p_x \ge 1/n$ (as the case $p_x \le 1/n$ will be covered by the above bound), and also assume that $p_x \le 3/4$; this assumption will be removed at the end.

We now assume that the ``sample from distribution $\mu$'' step always fails (this can only increase the number of queries used). The expected number of queries to $f$ used by Algorithm \ref{alg:unknown} is then upper bounded by
\[ \sum_{j=0}^{\lfloor \log_k \sqrt{n} \rfloor} \left( \prod_{i=0}^{j-1} (1-P_{\lfloor k^i\rfloor}) \right) P_{\lfloor k^j\rfloor} T_{j+1} + \left( \prod_{i=0}^{\lfloor \log_k \sqrt{n} \rfloor} (1-P_{\lfloor k^i\rfloor}) \right)\left(T_{\lfloor \log_k \sqrt{n} \rfloor+1} + \left\lceil \frac{\pi}{4} \sqrt{n} \right\rceil \right). \]
To bound this expression, we split the first sum into two parts. First, we have
\[ \sum_{j=0}^{\lfloor \log_k 1/\sqrt{p_x} \rfloor} \left( \prod_{i=0}^{j-1} (1-P_{\lfloor k^i\rfloor}) \right) P_{\lfloor k^j\rfloor} T_{j+1} \le T_{\lfloor \log_k 1/\sqrt{p_x} \rfloor + 1} \le \frac{k}{k-1} \frac{1}{\sqrt{p_x}}. \]
Using Lemma \ref{lem:uniformamp} and the fact that $p_x \le 3/4$, it holds for $j \ge \lfloor \log_k 1/\sqrt{p_x} \rfloor + 1$ that
\[ \prod_{i=0}^j (1-P_{\lfloor k^i\rfloor}) \le \left( \frac{3}{4} \right)^{j-\lfloor \log_k 1/\sqrt{p_x} \rfloor}. \]
Thus
\[
\begin{split}
\sum_{j=\lfloor \log_k 1/\sqrt{p_x} \rfloor+1}^{\lfloor \log_k \sqrt{n} \rfloor} & \left( \prod_{i=0}^{j-1} (1-P_{\lfloor k^i\rfloor}) \right) P_{\lfloor k^j\rfloor} T_{j+1}\\
\le \,&\, \frac{k}{k-1} \sum_{j=\lfloor \log_k 1/\sqrt{p_x} \rfloor + 1}^{\lfloor \log_k \sqrt{n} \rfloor} \left( \frac{3}{4} \right)^{j-\lfloor \log_k 1/\sqrt{p_x} \rfloor - 1} k^j \\
\le \,&\, \frac{k}{k-1} \left( \frac{4}{3} \right)^{\lfloor \log_k 1/\sqrt{p_x} \rfloor + 1} \sum_{j=0}^{\infty} \left( \frac{3k}{4} \right)^{j+\lfloor \log_k 1/\sqrt{p_x} \rfloor + 1}\\
\le \,&\, \frac{4k^2}{(k-1)(4-3k)} \frac{1}{\sqrt{p_x}},
\end{split}
\]
and also
\[
\begin{split}
\left( \prod_{i=0}^{\lfloor \log_k \sqrt{n} \rfloor}(1-P_{\lfloor k^i\rfloor}) \right)& \left(T_{\lfloor \log_k \sqrt{n} \rfloor+1} + \left\lceil \frac{\pi}{4} \sqrt{n} \right\rceil\right)\\
\le\,&\, \left(\frac{3}{4}\right)^{\lfloor \log_k \sqrt{n} \rfloor - \lfloor \log_k 1/\sqrt{p_x} \rfloor} \left(T_{\lfloor \log_k \sqrt{n} \rfloor+1} + \left\lceil \frac{\pi}{4} \sqrt{n} \right\rceil\right)\\
\le\,&\, \left(\frac{3}{4}\right)^{\log_k \sqrt{n} - \log_k 1/\sqrt{p_x} - 1}\left(\left(\frac{k}{k-1} + \frac{\pi}{4}\right) \sqrt{n} + 1\right)\\
\le\,&\, \left( \frac{4k}{3(k-1)} + \frac{\pi}{3} \right)\frac{1}{\sqrt{p_x}} + \frac{4}{3},
\end{split}
\]
where we use again the restriction that $p_x \ge 1/n$. Combining these bounds gives the following overall upper bound on the expected number of queries used:
\[ \left( \frac{k}{k-1} + \frac{4k^2}{(k-1)(4-3k)} + \frac{4k}{3(k-1)} + \frac{\pi}{3} \right) \frac{1}{\sqrt{p_x}} + \frac{4}{3}. \]
Minimising the bracketed expression over $k$ using simple calculus gives that the minimum is found at $k \approx 1.162$; for this value of $k$, we obtain a bound on the expected number of queries used that is approximately
\[ \frac{82.646}{\sqrt{p_x}} + \frac{4}{3}. \]
Finally, consider the case that $p_x \ge 3/4$. In this case, one can find a bound by assuming that only the sampling step in each iteration of the loop can succeed, and ignoring the amplitude amplification step. Using this assumption, the number of queries to $f$ used is upper bounded by
\[ \frac{3}{4} \sum_{j=0}^{\lfloor \log_k \sqrt{n} \rfloor} \left(\frac{1}{4}\right)^j (1 + T_j) + \left(\frac{1}{4}\right)^{\lfloor \log_k \sqrt{n}\rfloor + 1} \left(T_{\lfloor \log_k \sqrt{n} \rfloor + 1} + \left\lceil \frac{\pi}{4} \sqrt{n} \right\rceil \right), \]
which is readily seen to be upper bounded by
\[ \frac{3}{(k-1)(4-k)} + \frac{k}{k-1} + 2 + \frac{\pi}{4}. \]
Inserting the previously found value of $k$, $k \approx 1.162$, gives an upper bound of an expected $\approx 16.500$ queries used in this case, and completes the proof.
\end{proof}

% ------------------------------------------------------------------------------

\setcounter{section}{\value{powerlawtwosec}}
\setcounter{thm}{\value{powerlawtwo}}
\begin{prop}
Let $\mu = (p_x)$ be a probability distribution where $p_x \propto x^k$ for some constant $k<0$, and let $\mathcal{A}$ denote Algorithm \ref{alg:unknown}. Then
\begin{equation*}
T^*_{\mathcal{A}}(\mu) = \left\{ 
\begin{array}{ll}
  O(\sqrt{n}) &\,\,[-1 \le k < 0] \\
  O(n^{-(1/2+1/k)}) &\,\,[-2 < k < -1] \\
  O(\log n) &\,\,[k=-2] \\
  O(1) &\,\,[k<-2]
  \end{array}
\right.
\end{equation*}
\end{prop}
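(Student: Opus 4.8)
The plan is to apply Corollary \ref{cor:qstar}, which reduces the problem to estimating the two sums
\[ S_1 = \sum_{x : p_x > 1/n} \sqrt{p_x}, \qquad S_2 = \sqrt{n}\sum_{x : p_x \le 1/n} p_x, \]
for $p_x = \alpha x^k$. As in the proof of Proposition \ref{prop:powerlaw}, the first step is to pin down the normalising constant $\alpha = 1/\sum_{x=1}^n x^k$ by comparing the sum with $\int_1^n x^k\,dx$, giving $\alpha = \Theta(k+1)/(n^{k+1}-1)$ for $k \ne -1$, with the special cases $\alpha = \Theta(1/\log n)$ for $k=-1$ and $\alpha = \Theta(1)$ for $k<-1$. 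The second step is to locate the crossover point $x_0 = \max\{x \in [n] : p_x \ge 1/n\}$, which satisfies $x_0 = \lfloor (\alpha n)^{-1/k}\rfloor$ (recall $k<0$, so $x^k$ is decreasing and $p_x \ge 1/n \iff x \le x_0$). Inserting the estimate for $\alpha$, one finds $x_0 = \Theta(n)$ when $k > -1$ (indeed $x_0 = n$ eventually, so $S_2$ is empty) and $x_0 = \Theta(n^{-1/k})$ when $k < -1$.

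The third step is to estimate $S_1$ and $S_2$ by integrals. For $S_1$ we bound $\sqrt{\alpha}\sum_{x=1}^{x_0} x^{k/2}$ above by $\sqrt{\alpha}\bigl(1 + \int_1^{x_0} x^{k/2}\,dx\bigr)$; the integral is $\Theta(x_0^{k/2+1})$ when $k > -2$ and $\Theta(1)$ when $k < -2$ (with a $\Theta(\log x_0)$ at $k=-2$). For $S_2$ we bound $\alpha\sqrt{n}\sum_{x=x_0+1}^n x^k$ above by $\alpha\sqrt{n}\int_{x_0}^n x^k\,dx = \Theta(\alpha\sqrt{n}(n^{k+1}-x_0^{k+1})/(k+1))$, noting that for $-2 < k < -1$ the dominant term is $\alpha\sqrt{n}\,x_0^{k+1}$. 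Substituting the values of $\alpha$ and $x_0$ from the first two steps into these expressions, and checking which of $S_1$, $S_2$ dominates in each regime, yields: for $-1 \le k < 0$, $S_2$ is empty (or negligible) and $S_1 = \Theta(\sqrt n)$ (with the $\sqrt n/\log n$ behaviour at $k=-1$ absorbed into $O(\sqrt n)$); for $-2 < k < -1$, both sums are of order $n^{-(1/2+1/k)}$ (using $x_0 = \Theta(n^{-1/k})$ and $\alpha = \Theta(1)$, so $S_1 = \Theta(x_0^{k/2+1}) = \Theta(n^{-(1/2+1/k)})$ and similarly for $S_2$); for $k = -2$, $S_1 = \Theta(\log n)$ and $S_2 = o(1)$; and for $k < -2$, $S_1 = \Theta(1)$ and $S_2 = o(1)$. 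The cases $k = -1$ and $k = -2$ require separate direct verification since the integral estimate for $1/\alpha$ (resp. the inner sum) changes form there, but this only affects logarithmic factors that are swallowed by the stated $O(\cdot)$ bounds.

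The main obstacle is bookkeeping: one must carefully track how the joint behaviour of $\alpha$ and $x_0$ feeds into both sums, and in particular verify that in the range $-2 < k < -1$ the two a priori different-looking quantities $\sqrt{\alpha}\,x_0^{k/2+1}$ and $\alpha\sqrt{n}\,x_0^{k+1}$ are of the same order $n^{-(1/2+1/k)}$ — this is where the choice of $x_0$ as exactly the crossover point is essential, since it makes $\sqrt{p_{x_0}} = \sqrt{\alpha}\,x_0^{k/2}$ and $\sqrt{n}\,p_{x_0} = \sqrt n\,\alpha x_0^k$ coincide. The rest is routine integral estimation of the type already carried out in the proof of Proposition \ref{prop:powerlaw}.
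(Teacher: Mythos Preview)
Your proposal is correct and follows essentially the same approach as the paper's proof: both apply Corollary~\ref{cor:qstar}, estimate the normalising constant $\alpha$ by an integral as in Proposition~\ref{prop:powerlaw}, define the crossover point $x_0 = \lfloor(\alpha n)^{-1/k}\rfloor$ with the same asymptotics $\Theta(n)$ for $k>-1$ and $\Theta(n^{-1/k})$ for $k<-1$, bound the two resulting sums by integrals, and handle the borderline cases $k=-1,-2$ separately. Your explicit remark that the crossover definition forces $\sqrt{p_{x_0}}$ and $\sqrt{n}\,p_{x_0}$ to coincide (making $S_1$ and $S_2$ of the same order in the range $-2<k<-1$) is a nice clarification that the paper leaves implicit.
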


\begin{proof}
As in the proof of Proposition \ref{prop:powerlaw}, $p_x = \alpha x^k$ for some $\alpha$, and for $k \neq -1$,
\[ \alpha \le \frac{k+1}{n^{k+1}-1}. \]
Define $x_0 = \max \{x \in [n]: p_x \ge 1/n \} = \lfloor(\alpha n)^{-1/k}\rfloor$. Then, by Corollary \ref{cor:qstar},
\[ Q^*(\mu) \le K \sqrt{\alpha} \sum_{x=1}^{x_0} x^{k/2} + L\,\alpha \sqrt{n} \sum_{x=x_0+1}^{n} x^k + M, \]
for some constants $K$, $L$, $M$, implying that for $k \neq -1$, $k \neq -2$,
\beas Q^*(\mu) &\le& K \sqrt{\alpha} \left(1 + \int_1^{x_0} x^{k/2}\,dx\right) + L\,\alpha \sqrt{n} \int_{x_0}^{n} x^k\,dx + M\\
&\le& K \sqrt{\alpha}\left(1 - \frac{2}{k+2} + \frac{2 x_0^{k/2+1}}{k+2} \right) + L \frac{\alpha\sqrt{n}}{k + 1}(n^{k+1} - x_0^{k+1}) + M\\
&\le& K \sqrt{\frac{k+1}{n^{k+1}-1}} \left(1 - \frac{2}{k+2} + \frac{2 x_0^{k/2+1}}{k+2} \right) + L \frac{\sqrt{n}(n^{k+1}-x_0^{k+1})}{n^{k+1}-1} + M.
\eeas
Now note that
\[ x_0 \approx \left(\frac{n(k+1)}{n^{k+1}-1}\right)^{-1/k} \]
is $\Theta(n)$ for $k>-1$, and $\Theta(n^{-1/k})$ for $k < -1$. Inserting this into the previous expression we obtain the claimed results for the cases $k\neq -1$, $k \neq -2$. These remaining special cases can be verified directly.
\end{proof}

% ------------------------------------------------------------------------------

%\bibliographystyle{plain}
%\bibliography{../thesis}

% ------------------------------------------------------------------------------
\end{document}